\documentclass[a4paper,onecolumn,10pt]{article}
\pdfoutput=1
\usepackage[utf8]{inputenc}
\usepackage[english]{babel}
\usepackage[T1]{fontenc}
\usepackage{amsmath}
\usepackage{amsfonts}
\usepackage{amsthm}
\usepackage{hyperref}
\usepackage{graphicx}

\hypersetup{
    colorlinks=true,
    linkcolor=blue,
    citecolor=blue,
    urlcolor=blue
}

\DeclareMathOperator{\energy}{E}

\newtheorem{theorem}{Theorem}
\newtheorem{lemma}{Lemma}
\newtheorem{proposition}{Proposition}

\begin{document}

\title{Existence of abelian BPS vortices on surfaces with Neumann boundary conditions}

\date{}
\author{René García-Lara\thanks{Institute of Mathematics, 
National Autonomous University of Mexico.  
Cuernavaca, México. 
\texttt{Email}: rene.garcia@correo.uady.mx}}




\maketitle

\begin{abstract}
  Existence of abelian BPS vortices on a manifold with boundary
  satisfying Neumann boundary conditions is
  proved. Numeric solutions are constructed on the Euclidean disk, and the \(L^2\)-metric of the moduli space of one vortex located at the interior of a rotationally symmetry disk is studied. The results presented extend previous  work of Manton and Zhao on quotients of surfaces that admit a reflection.
\end{abstract}

\section{Introduction}\label{sec:motivation}

Vortices are well-known topological solitons in two dimensions. At critical coupling,
they have the interesting property that there is a
moduli space of vortices, which is also a K\"ahler manifold where geodesic curves
approximate the physically relevant dynamics of
fields on the surface~\cite{Stuart1994e}. Vortices of the Ginzburg-Landau functional
at critical coupling, or BPS vortices, satisfy a self-duality condition known as the
Bogomolny equations which on the other hand leads to an elliptic problem called the
Taubes equation. This problem is well studied for the Euclidean plane~\cite{taubes_arbitrary_1980} and for
closed surfaces with no boundary~\cite{bradlow_vortices_1990,garcia-prada_direct_1994,
  noguchi_yangmillshiggs_1987}.
For surfaces with boundary, Nasir proves the existence of vortices
whose Taubes equation satisfies Dirichlet boundary
conditions~\cite{nasir_study_1998}, his approach however cannot be applied
to the Taubes equation with Neumann boundary conditions.
Manton and Zhao~\cite{manton_neumann_2023_published} have recently found a Bradlow type
condition that guarantees the existence of vortices with Neumann boundary conditions in surfaces which are the quotient by a reflection of a smooth surface without boundary. They deduced that vortices can exist at the boundary of the surface,
where a vortex concentrates half the energy and magnetic flux of the respective
energy and magnetic flux of an interior vortex.
In this paper we study the Taubes equation to prove the existence of vortices
satisfying Neumann boundary conditions on an arbitrary Riemannian surface, thus
extending the result of Manton and Zhao to general surfaces. We also prove that
vortices at the boundary of the surface are half-vortices as in the previously studied case, in the sense that
they  have half the energy and magnetic flux of interior vortices. On a surface without  boundary, the moduli space of BPS-vortices has a well known Riemannian metric associated, together with a localization formula that relates geometric properties of moduli space to purely local data corresponding to the relative position of the vortices. We exhibit a family of surfaces where, although there is a well defined metric on moduli space, it is no longer true that the metric depends only on vortex position, in contrast to the well known case of vortices on a closed Riemannian surface and of vortices with Dirichlet boundary conditions~\cite{samolsMathematicalPhysicsVortex1992a,nasir_study_1998}.   

The paper is
organised as follows.
In Section~\ref{sec:bps-vortices-neumann} we describe the BPS vortex model of the Ginzburg-Landau functional and state the
main results,~theorems~\ref{thm:existence-vortices}  and~\ref{thm:tabues-problem}.
In Section~\ref{sec:taubes-reg} we
use techniques of elliptic partial differential equations to prove that, up to
gauge equivalence, there exists exactly one solution of the field equations. With
standard methods of elliptic operators we prove that this solution is smooth. In Section~\ref{sec:numerics} we present a pair of vortex solutions computed numerically on the Euclidean disk. In Section~\ref{sec:l2-metric} we introduce the \(L^2\) metric of moduli space and provide an example where the properties of the metric differ from the classical results on a Riemann surface without boundary. We finalise with some comments.

\section{Statement of main results}\label{sec:bps-vortices-neumann}

Let \( S \) be the interior of a smooth, orientable and oriented compact surface
\( \bar S \) with non empty boundary
\( \partial S \) equipped with a Riemannian metric and an Hermitian \( U(1) \) bundle,
such that for any pair \( (\phi, A) \) of a section \( \phi \) and a \( U(1) \)-connection \( A \),
the static Ginzburg-Landau energy is
\begin{align}
  \label{eq:gl-energy}
  \energy(\phi, A) = \frac{1}{2} \int_{S} |F|^2 + |D\phi|^2 +
  \frac{1}{4} {(1 - |\phi|^2)}^2 dV,
\end{align}
where \(F = dA\) is the curvature form of the connection and \(D\phi \)
is the associated
covariant derivative, represented in a local trivialisation \((x^1, x^2)\) as
\((\partial_j\phi - iA_j\phi)\,dx^j\).
We are concerned with BPS vortices of the Ginzburg-Landau
functional, these are energy minimizing pairs \((\phi, A)\) which are
solutions of the Bogomolny equations
\begin{align}
  \label{eq:bog-1}
  \bar\partial_A \phi & = 0,                                     \\
  \label{eq:bog-2}
  B                   & = \frac{1}{2} \left(1 - |\phi|^2\right),
\end{align}
where \(B = *F\) is the magnetic field across the surface and
 \(\bar{\partial}_A\phi \) is the anti-holomorphic partial derivative of \( \phi \),
which in a local coordinate system is the
section \( D_1\phi + i\,D_2\phi \). The current of a pair \( (\phi,
  A) \) is the vector field
\begin{align}
  J = \Re (i\bar\phi D\phi). \label{eq:current}
\end{align}
Let \( t \) and \( n \) be the tangent and outward pointing normal vector
fields on \( \partial S \), we choose the orientation of the boundary so that
\( (n,t) \) is a positively oriented
basis of \( T\bar S|_{\partial S} \). Let \( J_t \) and \( J_n \) be the tangent and normal projections of \( J \) at the boundary,
we focus on the condition \( J_t = 0 \) which we will call the
Neumann boundary condition from now onwards. We can decompose the Higgs field \( \phi \)
as \( \phi = e^{h/2 + i\chi} \),
where \( h = \log |\phi|^2 \) is a function with logarithmic singularities
and \( \chi \) is an unwrap of the phase, which need not be
unique; however, any two such phases will differ by an integer
multiple of \( 2\pi \). A direct calculation shows that by Equation~\eqref{eq:bog-1}, the Neumann boundary condition is
equivalent to \( \partial_n h = 0 \), justifying the name. 
Manton and Zhao prove~\cite{manton_neumann_2023_published} that if the zeroes of
\( \phi \) are located in the interior of
\( S \), and if \( (\phi, A) \) is a solution of the Bogomolny
equations, then the total energy of the fields is quantised,
\begin{equation}
  \label{eq:quant-energy}
  \energy(\phi,A) = n\pi,
\end{equation}
where \( n \) is the total number of zeroes of \( \phi \) counted with
multiplicity. Furthermore, the total magnetic flux is also quantised:
\begin{equation}
  \label{eq:quant-flux}
  \int_{\bar S} F = 2n\pi,
\end{equation}
in comparison, Nasir showed that for BPS vortices with Dirichlet boundary
conditions, only the energy remains quantised as in the case of closed
surfaces. 
For surfaces which are the quotient by a reflection of a closed surface,
Manton and Zhao proved that there are BPS vortices with zeroes on the boundary of 
the surface, which is not possible for Dirichlet boundary conditions. Let us call a set of points with multiplicities to any set  \( M \subset \bar S \times \mathbb{N} \), moreover, if \( M =
  {\{(s_i,m_i)\}}_{i=1,\ldots,n} \) is one such set, we define the total
multiplicity of \( M \) as the number \( \sum_{i=1}^n m_i \). The main result of the paper is the following theorem.

\begin{theorem}\label{thm:existence-vortices}
  Let \( S \) be the interior of a compact Riemannian surface with
  boundary \( \partial S \) and area \( A \). Given two finite sets of points with multiplicities,
  \begin{equation*}
    \mathcal{N} \subset S \times \mathbb{N} \qquad \text{and} \qquad
    \mathcal{M} \subset \partial S \times \mathbb{N},
  \end{equation*}
  \( \mathcal{N} \) or \( \mathcal{M} \) possibly empty but \( \mathcal{N} \cup
    \mathcal{M} \neq \emptyset \),
  with total multiplicities \( N \) and \( M \) respectively, then the following
  statements are equivalent:
  \begin{enumerate}
    \item Up to gauge
          equivalence, there exists exactly
          one solution \( (\phi, A) \) of the Bogomolny
          equations~\eqref{eq:bog-1}-\eqref{eq:bog-2} with Neumann boundary
          conditions, such that \( \mathcal{N}\cup \mathcal{M} \) is the set of
          zeroes of \( \phi \) counted with multiplicity.
    \item The condition
          \begin{equation}
            N + \frac{M}{2} < \frac{A}{4\pi}\label{eq:bradlow-cond}
          \end{equation}
          holds.
  \end{enumerate}
\end{theorem}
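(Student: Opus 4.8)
The plan is to reduce the coupled Bogomolny system~\eqref{eq:bog-1}--\eqref{eq:bog-2} to a single scalar semilinear elliptic equation for \( h = \log|\phi|^2 \) — the Taubes equation — carrying a Neumann boundary condition, and then to solve that equation variationally. Writing \( \phi = e^{h/2 + i\chi} \) and using~\eqref{eq:bog-1} to eliminate the connection, the second Bogomolny equation~\eqref{eq:bog-2} becomes, away from the zeroes of \( \phi \), an equation of the form \( \Delta h = e^{h} - 1 \), while the prescribed zeroes in \( \mathcal N \cup \mathcal M \) force logarithmic singularities \( h \sim 2m\log r \) at each zero of multiplicity \( m \); the Neumann condition \( J_t = 0 \) is equivalent, as noted above, to \( \partial_n h = 0 \). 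First I would show that this passage is reversible, so that statement~(1) is equivalent to the existence of a unique (modulo the residual gauge freedom) solution of the Taubes equation with these singularities and boundary condition.

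The necessity of~\eqref{eq:bradlow-cond} then follows from a single integral identity. Integrating the Taubes equation over \( S \) and using \( \int_S \Delta h\, dV = \int_{\partial S}\partial_n h\, ds = 0 \), each interior singularity contributes \( 4\pi m \) to the distributional right-hand side, whereas a singularity sitting on \( \partial S \) contributes only \( 2\pi m \), because under the Neumann condition only half of the corresponding point mass lies inside \( S \). This yields
\begin{equation*}
  \int_S e^{h}\, dV = A - 4\pi\Bigl(N + \tfrac{M}{2}\Bigr),
\end{equation*}
and positivity of the left-hand side is precisely the Bradlow condition~\eqref{eq:bradlow-cond}. This computation also makes transparent why a boundary vortex is a half-vortex.

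For the converse I would regularise the singularities by fixing a background function \( u_0 \), solving a linear Neumann problem with the prescribed point sources and a uniform constant chosen to meet the compatibility condition, so that \( h = u_0 + v \) with \( v \) a genuine element of \( H^1(S) \). The equation for \( v \) takes the form \( -\Delta v = \kappa - e^{u_0 + v} \) with \( \kappa = 1 - \tfrac{4\pi}{A}\bigl(N + \tfrac{M}{2}\bigr) > 0 \) under~\eqref{eq:bradlow-cond}, together with the natural boundary condition \( \partial_n v = 0 \). This is the Euler--Lagrange equation of
\begin{equation*}
  \mathcal F(v) = \int_S \Bigl( \tfrac12 |\nabla v|^2 + e^{u_0 + v} - \kappa v \Bigr)\, dV,
\end{equation*}
which is strictly convex — giving uniqueness of the critical point and hence the uniqueness asserted in~(1) — and, precisely because \( \kappa > 0 \), coercive on \( H^1(S) \): splitting \( v \) into its mean \( \overline v \) and mean-zero part, the exponential term controls \( \overline v \) from above, the linear term \( -\kappa\,\overline v\,A \) controls it from below, and Poincaré's inequality controls the mean-zero part. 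Direct minimisation then produces a weak solution, and elliptic regularity up to the boundary for the Neumann problem upgrades it to a smooth solution on \( S \), after which one reconstructs \( (\phi, A) \) from \( h \) by fixing the phase \( \chi \) with the correct winding around the zeroes.

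The main obstacle I anticipate is the analysis at the boundary zeroes. Unlike Manton and Zhao, who could realise \( S \) as a reflection quotient and thereby convert a boundary vortex into an ordinary interior vortex on the doubled closed surface, on a general surface one must confront the boundary singularities directly: one has to make rigorous sense of a point source on \( \partial S \) for the Neumann problem, construct a background \( u_0 \) with the correct half-logarithmic local model compatible with \( \partial_n u_0 = 0 \), and verify that the reconstructed field \( \phi \) exhibits the prescribed boundary zero of the stated multiplicity. Establishing the right function space and local asymptotics at these points — and confirming that the weighted count \( N + \tfrac{M}{2} \) is genuinely the obstruction — is where the substance of the argument lies; the variational existence and the regularity bootstrap are then comparatively routine.
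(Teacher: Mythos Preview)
Your reduction to the Taubes equation, the necessity argument via the integral identity, the regularisation \(h = u_0 + v\), and the identification of the boundary singularities as the crux all match the paper. The substantive difference is the existence step for the regularised equation. You minimise the strictly convex functional \(\mathcal F\) directly, obtaining uniqueness from convexity and existence from coercivity (splitting off the mean and using Poincar\'e plus the exponential/linear balance in \(\bar v\)). The paper instead follows Noguchi's functional-analytic scheme: it writes \(H^1 = \mathbb R \oplus \mathcal X\) (constants plus mean-zero functions), uses the integral constraint to express the constant \(c = c(w)\) as a function of the mean-zero part, builds an operator \(T(w) = Lw + \pi^2 P(c(w)+w)\) on \(\mathcal X\), and shows \(T\) is a homeomorphism. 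Both approaches ultimately rest on the same Moser--Trudinger inequality (Lemma~\ref{lem:exp-u-bound} in the paper; you need it for \(\int e^{u_0+v} < \infty\) and for lower semicontinuity/coercivity). Your route is arguably more elementary---convexity gives uniqueness in one line and the Neumann condition appears as the natural boundary condition for \(\mathcal F\)---while the paper's route makes the compatibility constraint determining the mean explicit and imports Noguchi's lemmas wholesale. For the boundary-zero issue you flag, the paper handles it exactly as you anticipate: it constructs (Appendix, Lemma~\ref{lem:green-boundary}) a background \(H_Q\) with \(\nabla^2 H_Q = 1/A\) and \(\partial_n H_Q = \delta^1_Q\), which furnishes the half-logarithmic local model at a boundary vortex.
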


Condition~\eqref{eq:bradlow-cond} is a Bradlow type condition,
previously found in~\cite{manton_neumann_2023_published} for surfaces which
are quotients with respect to a Riemannian reflection.
We will deduce this condition from the analysis of
the related elliptic problem that we describe in the next section, where
we prove that Theorem~\ref{thm:existence-vortices} holds, provided the elliptic problem has a unique solution.

\subsection{The Taubes equation with Neumann boundary conditions}

If \( (\phi, A) \) is a solution of the Bogomolny equations, let us define the 
function \( h = \log |\phi|^2 \).
Let \( \{X_1,\ldots, X_N\}\subset S \) be a set of interior points
of the surface, possibly repeated, and let \( \{W_1,\ldots, W_m\}
  \subset \partial S \) be a set of boundary points also counted with multiplicity.
Proving existence of a solution of the Bogomolny equations
reduces to prove the existence of a solution of the elliptic problem
\begin{align}
  \nabla^2 h                  & = e^h - 1 + 4\pi \sum_{k} \delta^2_{X_k},\label{eq:taubes} \\
  -\partial_n h|_{\partial S} & = 2\pi \sum_{j} \delta^1_{W_j},\label{eq:boundary-cond}
\end{align}

where \( \nabla^2 \) is the Laplace-Beltrami operator and \( \partial_{n}h \) is
the directional derivative at the boundary in the outward pointing
direction. Equation~\eqref{eq:taubes} is called the Taubes
equation, for this equation, existence of solutions in the plane and on a
closed surface is well
known~\cite{taubes_arbitrary_1980,garcia-prada_direct_1994}. 

\begin{theorem}\label{thm:tabues-problem}
  Let \( S \) be a smooth Riemannian surface with boundary \( \partial S \) and
  let \( \{ X_1, \ldots, X_N \} \subset S \), \( \{
    W_1,\ldots, W_M\} \subset \partial S \) be sets of points
  counted with repetition, then there exists exactly one solution
  of the Taubes equation~\eqref{eq:taubes} with boundary condition~\eqref{eq:boundary-cond},
  if and only if the Bradlow condition~\eqref{eq:bradlow-cond} holds.
  This solution is smooth away of the set of core points given.
\end{theorem}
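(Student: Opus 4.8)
The plan is to reduce the singular boundary value problem to a regular semilinear equation with homogeneous Neumann data and then solve it by a convexity argument. First I would dispose of the necessity of the Bradlow condition, which follows directly from the divergence theorem: integrating Equation~\eqref{eq:taubes} over \( S \) and inserting the boundary condition~\eqref{eq:boundary-cond} gives
\begin{equation*}
  \int_S e^{h}\,dV = A - 4\pi N - 2\pi M = A - 4\pi\Bigl(N + \tfrac{M}{2}\Bigr),
\end{equation*}
and since \( e^h \) is positive wherever it is finite the left-hand side is strictly positive, forcing~\eqref{eq:bradlow-cond}. This is already where the factor \( M/2 \) attached to the boundary points appears.

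Next I would remove the singularities. Using the Neumann Green's function of \( S \) I would construct a background function \( u_0 \), smooth away from the core points and determined up to a constant, solving the linear problem
\begin{align*}
  \nabla^2 u_0 &= 4\pi\sum_k \delta^2_{X_k} - \frac{4\pi N + 2\pi M}{A} \quad\text{in } S,\\
  -\partial_n u_0 &= 2\pi\sum_j \delta^1_{W_j} \quad\text{on } \partial S,
\end{align*}
where the constant \( (4\pi N + 2\pi M)/A \) is forced by the compatibility condition of the Neumann problem. The behaviour of \( u_0 \) at a boundary point \( W_j \) is best understood by reflecting across \( \partial S \) on the Riemannian double of \( \bar S \): the Neumann condition corresponds to even reflection, so a boundary source of weight \( 2\pi \) doubles to an interior source of weight \( 4\pi \), whence \( u_0 \sim 2\log\mathrm{dist}(\cdot, W_j) \) and \( e^{u_0} \) vanishes to second order and remains integrable near every core point, interior or boundary. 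Writing \( h = u_0 + v \) turns the problem into finding \( v \) with
\begin{equation*}
  \nabla^2 v = e^{u_0 + v} - \kappa, \qquad \partial_n v|_{\partial S} = 0,
  \qquad \kappa := 1 - \frac{4\pi(N + M/2)}{A},
\end{equation*}
and the Bradlow condition is exactly the statement \( \kappa > 0 \).

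I would solve this regular problem variationally, as the minimisation over \( H^1(S) \) of
\begin{equation*}
  I(v) = \int_S \tfrac12|\nabla v|^2 + e^{u_0 + v} - \kappa v \,dV,
\end{equation*}
whose critical points satisfy the equation together with the natural homogeneous Neumann boundary condition. That \( I \) is even well defined on \( H^1 \) in dimension two rests on the Moser--Trudinger inequality, which guarantees exponential integrability of \( H^1 \) functions. Splitting \( v \) into its mean \( \bar v \) and its mean-zero part \( v^\perp \), Jensen's inequality bounds the exponential term below by a positive multiple of \( e^{\bar v} \), the linear term contributes \( -\kappa A \bar v \), and the Poincaré inequality controls \( v^\perp \) by \( \|\nabla v\|_{L^2} \); coercivity then follows precisely because \( \kappa > 0 \), the term \( e^{\bar v} \) handling \( \bar v \to +\infty \) and the linear term handling \( \bar v \to -\infty \). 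Weak lower semicontinuity of the gradient term is convexity, and that of the exponential term follows from Moser--Trudinger together with the compact embedding \( H^1 \hookrightarrow L^p \), so the direct method yields a minimiser. Uniqueness comes for free from strict convexity of \( I \): its second variation is \( \int_S |\nabla\psi|^2 + e^{u_0+v}\psi^2\,dV \), which vanishes only for \( \psi \equiv 0 \) since \( e^{u_0+v} > 0 \) almost everywhere.

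Finally I would upgrade regularity by bootstrapping: a weak solution \( v \in H^1 \) has \( e^{u_0+v} \in L^p_{\mathrm{loc}} \) away from the cores by Moser--Trudinger, so elliptic estimates for the Neumann Laplacian give \( v \in W^{2,p} \), then \( C^{1,\alpha} \), and iterating yields \( v \in C^\infty \) on \( S\setminus\{X_k,W_j\} \); consequently \( h = u_0 + v \) is smooth away from the prescribed cores, with the expected logarithmic singularities at them. I expect the main obstacle to be the analysis near the boundary cores: establishing the precise doubling description of \( u_0 \), the integrability of \( e^{u_0} \), and the validity of the Moser--Trudinger and elliptic estimates up to \( \partial S \) under the Neumann condition, all of which are what make the half-weight \( M/2 \) in~\eqref{eq:bradlow-cond} rigorous rather than merely formal.
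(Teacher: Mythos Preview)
Your proposal is correct and complete. The reduction to a regular semilinear Neumann problem via subtraction of a background function \(u_0\) built from Neumann Green's functions is exactly what the paper does, and your explicit check that the Bradlow inequality is necessary (by integrating the equation) is a point the paper leaves implicit.

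Where you diverge from the paper is in the resolution of the regularised problem. The paper recasts it as the Kazdan--Warner type equation \(\nabla^2 v = p\,e^{2v} + K\) and then follows Noguchi's operator-theoretic scheme: split \(H^1 = \mathbb{R}\oplus\mathcal{X}\), eliminate the constant part by the identity~\eqref{eq:noguchi-c} (this is where \(\int K < 0\), i.e.\ the Bradlow bound, enters), and show that the resulting map \(T:\mathcal{X}\to\mathcal{X}\) in~\eqref{eq:op-T} is a homeomorphism, quoting the Trudinger--Moser inequality and its consequence that \(e^{u_j}\to e^u\) in \(L^2\) under weak \(H^1\) convergence. You instead run the direct method on the strictly convex functional \(I(v)=\int \tfrac12|\nabla v|^2 + e^{u_0+v} - \kappa v\), with coercivity from Jensen and Poincar\'e and uniqueness from strict convexity. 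Both routes rest on the same analytic input (Moser--Trudinger on a surface with boundary, compactness of \(H^1\hookrightarrow L^p\), elliptic bootstrap for the Neumann Laplacian). Your argument is somewhat more self-contained and yields uniqueness for free, while the paper's approach has the virtue of plugging directly into existing Kazdan--Warner/Noguchi machinery once Lemmas~\ref{lem:exp-u-bound} and~\ref{lem:exp-u-convergence} are checked on surfaces with boundary. Your doubling description of \(u_0\) near boundary cores is a clean alternative to the explicit construction the paper gives in its appendix (Lemma~\ref{lem:green-boundary}).
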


\noindent
Assume for a moment that Theorem~\ref{thm:tabues-problem} is proved, then the main theorem is a consequence of this as we prove in the following paragraphs. 

\begin{proof}[Proof of Theorem~\ref{thm:existence-vortices}]
  By equation~\eqref{eq:bog-2}, we can find the curvature 2-form
  \(F =*B \), hence, we can recover the connection \( A \) up to a gauge function.
  For \(\chi \) we follow a known prescription: For any singular point \(X_k \) of \( h \), in accordance to~\cite{taubes_arbitrary_1980} we choose a coordinate system
  \((U,\varphi) \), such that \(\varphi(X_k) = 0 \) 
  and define locally \(\chi_k(X) = n_k\arg(\varphi(X) - \varphi(X_k)) \), where \(n_k \) is the multiplicity of \( X_k \) and \( X \in U \setminus \{X_k\} \). Extending by a
  suitable family of bump functions, we can obtain a globally defined function
  \(\chi \) with the same singularity type as \(h/2 \). If we define \(\phi =
    e^{h/2 + i\chi} \), each singularity of \( h \) corresponds to a zero of \( \phi \) with
  prescribed multiplicity. Moreover, the first Bogomolny
  equation~\eqref{eq:bog-1} is equivalent to
  \begin{equation}
    \label{eq:bog-1-polar}
    A = -\frac{1}{2} *dh + d\chi.
  \end{equation}
  Equation~\eqref{eq:bog-1-polar} determines a connection \( A \) away of the
  singularities of \( h \) and extends smoothly to the singular set.
  By equation~\eqref{eq:bog-2}, away of the singular points,
  \begin{equation*}
    *dA = -\frac{1}{2}*d*dh = - \frac{1}{2}\nabla^2h = \frac{1}{2}(1 - e^h) =
    \frac{1}{2}(1 - |\phi|^2).
  \end{equation*}
  Thus \((\phi, A) \) is the solution of the Bogomolny equations. Conversely, for
  any solution of the Bogomolny equations, there is a decomposition of \( \phi \) in
  a pair \((h, \chi) \), such that \(\chi \) is locally well defined up to a gauge 
  function, and \((h, \chi) \) are solutions of equations~\eqref{eq:bog-1-polar},~\eqref{eq:taubes},~\eqref{eq:boundary-cond}.
  Therefore, up to gauge equivalence there exists exactly one solution to the
  Bogomolny equations.
\end{proof}
Equation~\eqref{eq:taubes} implies \( h \) is a function of logarithmic divergence at the singularities, with this observation, we can compute the total magnetic flux for arbitrary surfaces, extending the previous result for quotients of surfaces with reflection symmetry. 
\begin{proposition}
  If \((\phi, A) \) is a solution of the Bogomolny equations, with a total of \(N\) 
  zeroes of \( \phi \) in the interior of the surface and \(M \) zeroes at the boundary,
  then the total magnetic flux is
  \begin{equation}
    \label{eq:total-flux}
    \Phi = (2N + M)\pi.
  \end{equation}
\end{proposition}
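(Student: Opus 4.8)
The plan is to reduce the flux to an integral of the Higgs field modulus and then evaluate it through the Taubes equation, separating the interior cores from the boundary cores. Since on an oriented surface the magnetic field is the Hodge dual \( B = *F \), the curvature is \( F = B\,dV \), and the second Bogomolny equation~\eqref{eq:bog-2} together with \( h = \log|\phi|^2 \) gives
\begin{equation*}
  \Phi = \int_{\bar S} F = \frac{1}{2}\int_{S}\left(1 - |\phi|^2\right)dV = \frac{1}{2}\int_{S}\left(1 - e^{h}\right)dV.
\end{equation*}
Because \( e^{h} = |\phi|^2 \) is continuous and bounded on \( \bar S \), merely vanishing at the zeroes of \( \phi \), this integral converges, and away from the core points the Taubes equation~\eqref{eq:taubes} reads \( \nabla^2 h = e^{h} - 1 \), so the integrand equals \( -\nabla^2 h \) there.

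Next I would integrate \( -\nabla^2 h \) by the divergence theorem, excising a small geodesic disk \( B_\epsilon(X_k) \) about each interior core \( X_k \), since \( h \) diverges logarithmically there. On \( S_\epsilon = S\setminus\bigcup_k B_\epsilon(X_k) \) the function is smooth and
\begin{equation*}
  -\int_{S_\epsilon}\nabla^2 h\,dV = -\int_{\partial S}\partial_n h\,ds - \sum_k\int_{\partial B_\epsilon(X_k)}\partial_\nu h\,ds,
\end{equation*}
with \( \nu \) the outward normal of \( S_\epsilon \). For the interior contributions I would use that \( h = 2n_k\log r + O(1) \) near \( X_k \), where \( n_k \) is the multiplicity and \( r \) the geodesic distance, since \( |\phi|^2 \) vanishes to order \( 2n_k \); the normal of \( S_\epsilon \) points towards \( X_k \), so each circle integral tends to \( -4\pi n_k \) as \( \epsilon\to 0 \), giving \( +4\pi N \) in total. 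The outer boundary term is fixed by the Neumann condition~\eqref{eq:boundary-cond}: interpreting \( \partial_n h \) distributionally on \( \partial S \) we get \( \int_{\partial S}\partial_n h\,ds = -2\pi\sum_j m_j = -2\pi M \), which contributes \( +2\pi M \). Collecting the two contributions yields \( \int_S(1-e^h)\,dV = 4\pi N + 2\pi M \), hence \( \Phi = \tfrac12(4\pi N + 2\pi M) = (2N+M)\pi \), which is~\eqref{eq:total-flux}.

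I expect the delicate point to be the treatment of the boundary cores \( W_j \): one must check that the local model \( h \sim 2m_j\log r \) on a half-disk is exactly what the distributional boundary condition~\eqref{eq:boundary-cond} encodes, so that a boundary core enters the flux with weight \( m_j \) rather than the \( 2n_k \) of an interior core. Concretely, passing to the reflected (doubled) picture — or computing \( \lim_{\eta\to 0^+}(-\partial_n h) \) at distance \( \eta \) from \( \partial S \) tested against a boundary function — shows that \( -\partial_n h = 2\pi\sum_j m_j\,\delta^1_{W_j} \) is consistent with the semicircular arc around \( W_j \) contributing \( 2\pi m_j \), i.e. half of the \( 4\pi m_j \) an interior vortex of the same multiplicity would give. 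This is precisely the half-vortex phenomenon, and once the identification is in place the remaining steps are routine limits.
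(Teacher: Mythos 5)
Your proof is correct and follows essentially the same route as the paper: both reduce the flux to \( -\tfrac{1}{2}\int \nabla^2 h \) and evaluate it by the divergence theorem with logarithmic local models at the cores, the only cosmetic difference being that you enter through the second Bogomolny equation plus the Taubes equation while the paper uses \( F = -\tfrac{1}{2}\,d\!*\!dh \) from the first. The one place where your write-up defers work is the displayed divergence-theorem identity, which excises only the interior disks and absorbs the boundary cores into a ``distributional'' reading of \( \int_{\partial S}\partial_n h\,ds \); making that rigorous is exactly the semicircle excision around each \( W_j \) that the paper performs explicitly, and since you identify this as the delicate point and sketch the correct resolution (the half-disk contributes \( 2\pi m_j \), half of the interior value), there is no gap.
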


\begin{proof}
  Let \(X_k \in S \) be a zero of \( \phi \) and let \(\varphi: U_k \to
    \mathbb{C} \) be a normal coordinate system such that \(\varphi(X_k) = 0 \). Let \( r_k(X) = |\varphi(X)| \) for \( X \in U_k \). Analogously, if \( W_j \in \partial S \) is the
  position of a zero of \( \phi \), let \( \varphi: U_j \to H^+ \) be a normal coordinate
  chart such that \( \varphi(W_j) = 0 \), where \( H^+ \) is the upper semi plane of complex numbers \( z \) such that \( \Im(z) \geq 0 \). For any \( W \in U_j \), we define \( r_j(W) = |\varphi(W)| \). Let \( \epsilon > 0 \) be a positive number smaller than the 
  injectivity radius of the metric, let \( B_{\epsilon}(X_k) \) be the geodesic
  ball around \( X_k \) of radius \( \epsilon \) and  let \( B_{\epsilon}(W_j) \) be the geodesic semi ball around \( W_j \) of the same radius. If \(U_\epsilon = \bar S \setminus (\cup_k B_{\epsilon}(X_k) \cup_j B_{\epsilon}(W_j)) \),
  the total magnetic flux can be computed as,
  \begin{multline}
    \Phi = \lim_{\epsilon\to 0}\int_{U_\epsilon} F
    = - \frac{1}{2}\lim_{\epsilon\to 0} \int_{U_\epsilon} d*dh
    = - \frac{1}{2} \lim_{\epsilon\to 0} \int_{\partial U_{\epsilon}}
    *dh \\
    = \frac{1}{2} \sum_k \lim_{\epsilon\to 0} \int_{\partial B_{\epsilon}(X_k)}
    *dh + \frac{1}{2} \sum_j \lim_{\epsilon\to 0} \int_{\partial B_{\epsilon}(W_j)}
    *dh,
  \end{multline}
  where in the second equality we used equation~\eqref{eq:bog-1-polar} and in the last equation, \( \partial B_{\epsilon}(W_j) \) is the semi-sphere 
  \( \{W \in U_j \mid |r_j(W)| = \epsilon \} \). We also note that each boundary term in the last equality is oriented in the orientation induced by the
  outward-pointing normal with respect to the respective centre. Since \( h \) is the solution of equation~\eqref{eq:taubes}, there are locally defined smooth functions \( h_k:U_k \to \mathbb{R} \) such that \(h|_{U_k} = 2n_k\log(r_k) + h_k \),
  where \(n_k \) is the multiplicity of \(X_k \), similarly, there are well-defined smooth functions \(h_j: U_j \to \mathbb{R} \) such that \(h|_{U_j} = 2m_j\log r_j + h_j \), whence,
  \begin{align}
    \frac{1}{2}\lim_{\epsilon\to 0} \int_{\partial B_{\epsilon}(X_k)} *dh & =
    n_k\lim_{\epsilon\to 0} \int_{\partial B_{\epsilon}(X_k)} *d\log(r_k)  \nonumber                                                   \\
                                                                          & = n_k\lim_{\epsilon\to 0} \frac{1}{\epsilon}\int_{\partial
    B_{\epsilon}(X_k)} *dr_k \nonumber                                                                                                 \\
                                                                          & = 2n_k\pi,
  \end{align}
  since \(*dr_k = \epsilon d\theta_k \) where \(\theta_k(X) = \arg(\varphi(X)) \) for \(X \in U_k\setminus \{X_k\} \). Finally, for zeroes at the boundary, the computation is analogous, except that the domain of the argument is a semicircle, hence the integral is half of the previously computed value, and the proposition follows.
\end{proof}

The classical Bogomolny trick relates the total energy of vortices to the magnetic flux by means of the relation,
\begin{align}
  \operatorname{E} = \frac{1}{2} \Phi + \frac{1}{2} \int_{\bar{S}} \left(
  \Im \left(\langle
    D\phi, *D\phi
    \rangle\right) - |\phi|^2 F \label{eq:bog-trick-integral}
  \right),
\end{align}
where the bracket \(\langle \cdot, \cdot \rangle \) denotes the complex inner product on the tangent bundle. Manton and Zhao deduce that in the interior of  any compact surface, the integrand in Equation~\eqref{eq:bog-trick-integral} is \( d J \), where \( J \) is the current defined on Equation~\eqref{eq:current}.
Recall the Neumann condition states \(J_{t} = 0\), where \( J_{t} \) is the tangential component of the current at the boundary. This implies that the energy is
\begin{align}
  \operatorname{E} & = \frac{1}{2} \Phi + \frac{1}{2} \int_{\bar{S}} dJ \nonumber        \\
                   & = \frac{1}{2} \Phi + \frac{1}{2} \int_{\partial{S}} J_{t} \nonumber \\
                   & = \frac{1}{2} \Phi, \label{eq:flux-integral}
\end{align}
whence, for any surface, not necessarily of reflection type, the Energy of a vortex system is also given by Equation~\eqref{eq:quant-energy} whenever we weight each vortex at the boundary as half a vortex located in the interior.
In the next section we will prove Theorem~\ref{thm:tabues-problem}, the plan
for the proof is the following. Firstly, assume the existence
of a pair of Green functions \(G \) and \( H \), such that for \( Z \in \bar S \),

\begin{align}
  \nabla^2 G(\cdot, Z)                 & = \delta_Z, \\
  \partial_n G(\cdot, Z)|_{\partial S} & = 0,
\end{align}

and

\begin{align}
  \nabla^2 H(\cdot, Z)                 & = \frac{1}{A}, \\
  \partial_n G(\cdot, Z)|_{\partial S} & = 0,
\end{align}

where \( A \) is the area of \( S \) and \( \partial_n G \) is the normal exterior
derivative of \( G(\cdot, Z) \) at the boundary. We use these functions to define
the function \( \tilde h: \bar{S} \to \mathbb{R} \) such that
\begin{align}
  h = \tilde h + 4\pi \sum_k G(\cdot, X_k) + 2\pi \sum_j H(\cdot, W_j).
\end{align}
With this setup, \( h \) is a solution of the Taubes equation if and only if
\( \tilde h \) is a solution of the
elliptic problem
\begin{align}
  \nabla^2 \tilde h   & = e^{\tilde h + u_0} - 1 - \left(N + \frac{M}{2}\right) \frac{4\pi}{A},\label{eq:taubes-reg} \\
  \partial_n \tilde h & = 0,
\end{align}
where
\begin{align}
  u_0 = - 4\pi \sum_k G(\cdot, X_k) - 2\pi \sum_j H(\cdot, W_j).\label{eq:u0}
\end{align}
We call equation~\eqref{eq:taubes-reg} the regularised Taubes
equation, this is a Kazdan-Warner type equation~\cite{kazdan1974curvature}, with
the difference that the surface has a boundary. In the next section,
we will rely on the results
of the reference~\cite{noguchi_yangmillshiggs_1987} to justify
the existence of a unique solution of the regularised Taubes equation
and that this solution is smooth, this claim will imply
Theorem~\ref{thm:tabues-problem}.

\section{The regular elliptic problem}\label{sec:taubes-reg}

In this section we aim to prove the existence of a unique solution of the
regularised Taubes
equation~\eqref{eq:taubes-reg} with Neumann boundary conditions. It is simpler to consider the more general equation
\begin{align}
  \nabla^2 v = pe^{2v} + K,\label{eq:noguchi-problem}
\end{align}
where \(p: \bar S \to \mathbb{R} \) is a smooth function, positive except for a
finite number of zeroes and \(K \) is a function satisfying the condition
\begin{align}
  \int_{\bar S} K dV < 0.\label{eq:noguchi-condition}
\end{align}
For compact surfaces without boundary, there exists a unique solution
of~\eqref{eq:noguchi-problem} with the condition given on \(K \)~\cite[cf. Sec.
  IV]{noguchi_yangmillshiggs_1987}, we will extend the techniques of the
reference to surfaces with boundary, to this end, let \( H^1 \) be the
Sobolev space \( W^{1,2}(\bar S) \) of square integrable functions on \( \bar S \) with weak square integrable first derivatives. In the
sequel we denote by \( \bar u = A^{-1}\,\int_{\bar S}\,u\,dV \) the average of an integrable function \( u \), where
\( A \) is the surface area.
We establish the following lemmas.
\begin{lemma}\label{lem:exp-u-bound}
  If \( ||\cdot|| \) is the \( L^2 \) norm in \( \bar S \) and \( u \in H^1 \), then the function \( e^u \) is integrable and there exist constants \( \beta, \gamma \) such that for any \( \alpha  > 0 \),
  \begin{equation}\label{eq:trudinger}
    \int_{\bar S} e^{\alpha |u|} dV \leq \gamma \exp\left(\alpha |\bar u| +
    \frac{{(\alpha ||\nabla u||)}^2}{4\beta}\right).
  \end{equation}
\end{lemma}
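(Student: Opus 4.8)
The plan is to reduce~\eqref{eq:trudinger} to the classical Moser--Trudinger inequality, which I will establish on the bounded surface \( \bar S \), and then convert the resulting quadratic--exponential bound into the stated linear one by an elementary pointwise inequality. Throughout it suffices to argue for \( w \in C^\infty(\bar S) \), which is dense in \( H^1 \), and to pass to general \( u \in H^1 \) by approximation together with Fatou's lemma; integrability of \( e^u \) is then a byproduct obtained by setting \( \alpha = 1 \).

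First I would split off the average. Writing \( w = u - \bar u \), so that \( \bar w = 0 \) and \( \nabla w = \nabla u \), the triangle inequality \( |u| \leq |\bar u| + |w| \) gives the pointwise bound \( e^{\alpha|u|} \leq e^{\alpha|\bar u|}\,e^{\alpha|w|} \). Since \( |\bar u| \) is constant, the factor \( e^{\alpha|\bar u|} \) pulls out of the integral, and the lemma reduces to producing a bound of the form \( \int_{\bar S} e^{\alpha|w|}\,dV \leq \gamma\,\exp\big((\alpha\|\nabla w\|)^2/(4\beta)\big) \) valid for every zero--average \( w \) and every \( \alpha > 0 \), with \( \beta,\gamma \) independent of \( \alpha \).

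The core step is the Moser--Trudinger inequality on \( \bar S \): there are constants \( \beta > 0 \) and \( C \) such that every \( w \in H^1 \) with \( \bar w = 0 \) satisfies \( \int_{\bar S} \exp\big(\beta w^2/\|\nabla w\|^2\big)\,dV \leq C \). On a closed surface this is the classical estimate underlying the argument in~\cite{noguchi_yangmillshiggs_1987}; to obtain it in the presence of a boundary I would pass to the double \( \hat S = \bar S \cup_{\partial S} \bar S \), a closed surface, and extend \( w \) by even reflection across \( \partial S \). The reflected function \( \hat w \) lies in \( H^1(\hat S) \), still has zero average by symmetry, and satisfies \( \|\nabla \hat w\|_{L^2(\hat S)}^2 = 2\|\nabla w\|_{L^2(\bar S)}^2 \); applying the closed--surface inequality to \( \hat w \) and restricting to one copy of \( \bar S \) yields the desired bound (with \( \beta \) halved). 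I expect this doubling construction to be the main technical obstacle, since the doubled metric is in general only Lipschitz across the seam and one must check that this regularity still supports the classical inequality; an alternative that sidesteps the issue is a direct proof by covering \( \bar S \) with finitely many coordinate balls and boundary half--balls and patching the Euclidean Moser--Trudinger estimates with a partition of unity.

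Finally I would interpolate. With the Moser--Trudinger bound in hand, the elementary identity \( (b - a/2)^2 \geq 0 \), applied to \( b = \sqrt{\beta}\,|w|/\|\nabla w\| \) and \( a = \alpha\|\nabla w\|/\sqrt{\beta} \), yields the pointwise estimate
\[
  \alpha|w| \leq \frac{\beta w^2}{\|\nabla w\|^2} + \frac{\alpha^2\|\nabla w\|^2}{4\beta}.
\]
Exponentiating and integrating, the first term is absorbed by the constant \( C \) from the Moser--Trudinger inequality while the second pulls out as a constant, giving \( \int_{\bar S} e^{\alpha|w|}\,dV \leq C\,\exp\big((\alpha\|\nabla w\|)^2/(4\beta)\big) \). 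Combining this with the factored--out \( e^{\alpha|\bar u|} \) from the first step and recalling \( \|\nabla w\| = \|\nabla u\| \) gives exactly~\eqref{eq:trudinger} with \( \gamma = C \); the degenerate case \( \nabla w \equiv 0 \), which forces \( w \equiv 0 \), is trivial.
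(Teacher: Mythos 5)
Your argument is correct, and its final reduction is in fact the same one the paper leans on: the cited Kazdan--Warner proof also passes through a Trudinger-type bound \( \int \exp\bigl(\beta w^2/\|\nabla w\|^2\bigr)\,dV \leq C \) for zero-average \( w \) and then applies exactly your elementary inequality \( \alpha|w| \leq \beta w^2/\|\nabla w\|^2 + \alpha^2\|\nabla w\|^2/(4\beta) \). Where you genuinely diverge is in how the Trudinger inequality is transported to a surface with boundary. The paper does not double the surface; it observes that the two ingredients of the classical power-series proof --- the Sobolev bound \( \|u\|_p \leq C p^{1/2}\|\nabla u\|_2 \) with its explicit growth in \( p \), and the Poincar\'e inequality --- remain valid on a compact surface with boundary, so the closed-surface argument runs verbatim. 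Your primary route via the double \( \hat S \) carries the wrinkle you yourself flag: the doubled metric is only Lipschitz across the seam unless \( \partial S \) is totally geodesic. This is not fatal, since the lemma needs no sharp constant and the non-sharp Moser--Trudinger inequality survives for metrics with merely bounded measurable coefficients (and even reflection does preserve \( H^1 \) with doubled Dirichlet energy), but it forces you to justify a statement slightly outside the standard smooth setting. Your fallback --- patching Euclidean estimates over coordinate balls and boundary half-balls with a partition of unity --- is essentially a localized version of what the paper does globally, and either of your two routes closes the argument; the paper's choice simply buys a shorter path by reusing boundary-valid functional inequalities rather than modifying the underlying geometry.
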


\begin{lemma}\label{lem:exp-u-convergence}
  If \(u_j \to u \) weakly in \( H^1 \), then \( e^{u_j} \to e^u \) strongly in \( L^2 \).
\end{lemma}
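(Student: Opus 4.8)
The plan is to combine the exponential integrability furnished by Lemma~\ref{lem:exp-u-bound} with a compactness argument and Vitali's convergence theorem. First I would observe that weak convergence in \( H^1 \) implies, via the Banach--Steinhaus theorem, that \( \sup_j \|u_j\|_{H^1} < \infty \); in particular the averages \( \bar u_j \) and the gradient norms \( \|\nabla u_j\| \) are bounded uniformly in \( j \). Since \( \bar S \) is a compact surface with smooth boundary, the Rellich--Kondrachov theorem makes the embedding \( H^1 \hookrightarrow L^2 \) compact, so \( u_j \to u \) strongly in \( L^2 \). Passing to a subsequence (which I will not relabel) we may then assume \( u_j \to u \) almost everywhere on \( \bar S \), and by continuity of the exponential, \( e^{u_j} \to e^{u} \) almost everywhere as well.

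The next step is to upgrade this pointwise convergence to strong \( L^2 \) convergence, and this is exactly where Lemma~\ref{lem:exp-u-bound} enters. Applying the estimate~\eqref{eq:trudinger} with \( \alpha = 4 \) and inserting the uniform bounds on \( |\bar u_j| \) and \( \|\nabla u_j\| \), the right-hand side is dominated by a constant independent of \( j \); hence
\[
  \sup_j \int_{\bar S} |e^{u_j}|^4 \, dV \leq \sup_j \int_{\bar S} e^{4|u_j|}\, dV < \infty .
\]
Thus \( \{e^{u_j}\} \) is bounded in \( L^4(\bar S) \). Writing \( |e^{u_j} - e^{u}|^2 \leq 2|e^{u_j}|^2 + 2|e^{u}|^2 \) and noting that \( e^{u}\in L^4 \) as well (again by Lemma~\ref{lem:exp-u-bound}), it follows that \( \{|e^{u_j} - e^{u}|^2\} \) is bounded in \( L^2(\bar S) \). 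On the finite-measure space \( \bar S \), boundedness in \( L^2 \) yields uniform integrability of this family by Hölder's inequality.

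Finally I would invoke Vitali's convergence theorem: the sequence \( |e^{u_j} - e^{u}|^2 \) tends to zero almost everywhere and is uniformly integrable, so it converges to zero in \( L^1 \), which is precisely the assertion that \( e^{u_j} \to e^{u} \) strongly in \( L^2 \) along the subsequence. Because the limit \( e^{u} \) does not depend on the chosen subsequence, the usual argument (every subsequence admits a further subsequence converging to the same limit) promotes this to convergence of the full sequence. The main obstacle is precisely this upgrade from almost-everywhere to \( L^2 \) convergence: weak \( H^1 \) convergence by itself exerts no control on the exponential, and it is only the Trudinger--Moser-type higher integrability of Lemma~\ref{lem:exp-u-bound} that prevents concentration and licenses the application of Vitali's theorem.
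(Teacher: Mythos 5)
Your proof is correct. Note that the paper does not actually write out a proof of this lemma: it simply asserts that the argument of Kazdan and Warner for closed surfaces carries over verbatim once Lemma~\ref{lem:exp-u-bound} is known to hold on surfaces with boundary. The Kazdan--Warner route is a direct quantitative estimate: one writes \( |e^{a}-e^{b}| \leq |a-b|\,(e^{|a|}+e^{|b|}) \), applies H\"older to get
\[
  \|e^{u_j}-e^{u}\|_{L^2}^2 \;\leq\; \|u_j-u\|_{L^4}^{2}\left(\int_{\bar S} e^{4|u_j|+4|u|}\,dV\right)^{1/2},
\]
bounds the exponential factor uniformly via the Trudinger--Moser inequality, and sends the first factor to zero using the compact embedding \( H^1 \hookrightarrow L^4 \). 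Your argument uses the same two ingredients (the uniform exponential bound of Lemma~\ref{lem:exp-u-bound} and Rellich compactness) but replaces the explicit estimate by a soft mechanism: almost-everywhere convergence along a subsequence, uniform integrability of \( |e^{u_j}-e^{u}|^2 \) from the \( L^4 \) bound, Vitali's theorem, and the subsequence trick to recover the full sequence. What the quantitative route buys is an explicit rate in terms of \( \|u_j-u\|_{L^4} \) and no need to extract subsequences; what your route buys is robustness --- it works with any exponent \( \alpha>2 \) in~\eqref{eq:trudinger} and avoids the elementary but easy-to-misstate mean-value inequality for the exponential. Either version is an acceptable proof of the lemma.
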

For surfaces without boundary, the proof of both lemmas can be found
in~\cite{kazdan1974curvature}. The first lemma describes a Trudinger-Moser
type inequality, the proof relies on the basic Sobolev identity
\begin{equation*}
  ||u||_p \leq C p^{1/2} ||\nabla u||_{2},\qquad p \geq 1,
\end{equation*}
and on the Poincaré inequality, which are both valid for surfaces with
boundary, whence~\eqref{eq:trudinger} remains valid in this case. The
second lemma is a consequence of the first one, the proof remains unchanged for
surfaces with boundary.

If \( (\cdot.\cdot) \) is the inner product on \( W^{1,2}(\bar S) \), by the Riesz representation lemma, there are unique maps \( L, P: H^1 \to H^1 \) such that, for any pair \( u, v \in H^1 \),
\begin{align*}
  (Lu, v) = \int_{\bar S} \langle \nabla u, \nabla v\rangle dV,\qquad
  (P(u),v) = \int_{\bar S}\,pe^{2u}v dV.
\end{align*}
We note that the functional \( P \) is well defined since as a consequence of
Lemma~\ref{lem:exp-u-bound}, \( e^u \in L^p \) for any \( p > 0 \), hence, for \( u, v \in H^1 \) we have,
\begin{align*}
  \left| \int_{\bar S} pe^{2u}v dV \right| & \leq \left(\max_{\bar S} p\right)
  ||e^{2u}||\; ||v|| < \infty,
\end{align*}
where we used the fact that \( p \) is smooth and the Cauchy-Schwarz inequality.
We also note that the kernel of \( L \) is the set of constant functions, which is
isomorphic to \( \mathbb{R} \), moreover, the orthogonal complement of \( \ker L \) in
\( H^1 \) is the subspace,
\begin{align*}
  \mathcal{X} = \{u \in H^1\mid \bar u = 0\}.
\end{align*}
Any \( u\in H^1 \) decomposes uniquely as \( u = \bar u + w \) for some \( w \in
  \mathcal{X} \), our task is to show
that~\eqref{eq:noguchi-problem} has a weak solution,
in the sense that there exists a function \( u \in H^1 \) such that for
all \( v \in H^1 \), the equation
\begin{align}
  \label{eq:weak-noguchi-problem}
  \int_{\bar S} \langle \nabla u,\,\nabla v\rangle \, dV +
  \int_{\bar S}\,pe^{2u}\,v\,dV = -\int_{\bar S}\,K\,v\,dV,
\end{align}
holds.
Assume for a moment that this is the case, if \( u = c + w \), \( c \in \mathbb{R} \),
\( w\in \mathcal{X} \), then substituting \( v = 1 \) in
equation~\eqref{eq:weak-noguchi-problem}, we find
\begin{align}
  \label{eq:c-cond}
  e^{2c}\,\int_{\bar S}\,pe^{2w}\,dV = -\int_{\bar S}\,K\,dV.
\end{align}
Hence,
\begin{align}
  \label{eq:noguchi-c}
  c = \frac{1}{2}\left(\log\left(-\int_{\bar S}\,K\,dV\right) -
  \log\left(\int_{\bar S}\,pe^{2w}\,dV\right)\right),
\end{align}
i.e.\  \( c \) is uniquely determined by \( w \), whence~\eqref{eq:noguchi-c}
determines a function \( c: \mathcal{X} \to \mathbb{R} \).
Since \( H^1 = \mathbb{R} \oplus \mathcal{X} \), the projection operator \( \pi^2: H^1 \to \mathcal{X} \)
determines a functional \( T: \mathcal{X} \to \mathcal{X} \) defined as,
\begin{equation}
  T(w) = L w + \pi^2P(c(w) + w).\label{eq:op-T}
\end{equation}

\begin{lemma}\label{lem:noguchi-steps}
  \( T \) is an homeomorphism of \( \mathcal{X} \) onto \( \mathcal{X} \).
\end{lemma}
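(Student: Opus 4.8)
The plan is to show that \( T \) is \emph{strongly monotone}, continuous and coercive on the Hilbert space \( \mathcal{X} \), and then to invoke the Browder--Minty theorem to conclude that \( T \) is a bijection with continuous inverse, hence a homeomorphism. The observation that drives the whole argument is that the constant \( c(w) \) was defined in~\eqref{eq:noguchi-c} precisely so that, writing \( u = c(w) + w \), the total mass
\[
  \int_{\bar S} p\,e^{2u}\,dV \;=\; -\int_{\bar S} K\,dV \;=:\; \kappa > 0
\]
is a fixed positive constant \emph{independent of} \( w \); this is exactly the relation~\eqref{eq:c-cond}.

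First I would establish strong monotonicity. For \( w_1, w_2 \in \mathcal{X} \) put \( u_i = c(w_i) + w_i \). Since \( \pi^2 \) is the orthogonal projection onto \( \mathcal{X} \) and \( w_1 - w_2 \in \mathcal{X} \), the projection may be dropped against \( w_1 - w_2 \), giving
\[
  (T(w_1) - T(w_2),\, w_1 - w_2) = \|\nabla(w_1 - w_2)\|^2 + \int_{\bar S} p\,(e^{2u_1} - e^{2u_2})(w_1 - w_2)\,dV.
\]
Writing \( w_1 - w_2 = (u_1 - u_2) - (c(w_1) - c(w_2)) \) splits the last integral into two pieces. The piece carrying \( u_1 - u_2 \) is non-negative because \( p \geq 0 \) and \( t \mapsto e^{2t} \) is increasing; the piece carrying the constant \( c(w_1) - c(w_2) \) is proportional to \( \int_{\bar S} p(e^{2u_1} - e^{2u_2})\,dV = \kappa - \kappa = 0 \) by the defining property of \( c \). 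Hence the nonlinear contribution is non-negative and, by the Poincaré inequality on \( \mathcal{X} \), there is \( c_P > 0 \) with \( (T(w_1) - T(w_2),\, w_1 - w_2) \geq \|\nabla(w_1 - w_2)\|^2 \geq c_P\,\|w_1 - w_2\|_{H^1}^2 \). Strong monotonicity yields injectivity at once, and, taking \( w_2 = 0 \), coercivity: \( (T(w), w)/\|w\|_{H^1} \to \infty \) as \( \|w\|_{H^1} \to \infty \).

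Next I would verify that \( T \) is continuous. The linear part \( L \) is bounded. For the nonlinear part, Lemma~\ref{lem:exp-u-bound} guarantees that \( e^{2w} \in L^1 \) with a bound governed by \( \|\nabla w\| \), so \( \int_{\bar S} p\,e^{2w}\,dV \) is finite and strictly positive, while Lemma~\ref{lem:exp-u-convergence} shows this integral depends continuously on \( w \in H^1 \); thus \( c(w) \) is continuous, and the same two lemmas give continuity of \( w \mapsto \pi^2 P(c(w)+w) \) into \( \mathcal{X} \). With \( T \) continuous, monotone and coercive on the reflexive Hilbert space \( \mathcal{X} \), the Browder--Minty theorem gives surjectivity; combined with injectivity, \( T \) is a bijection. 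Finally, strong monotonicity gives the a priori estimate \( \|w_1 - w_2\|_{H^1} \leq c_P^{-1}\,\|T(w_1) - T(w_2)\|_{H^1} \), so \( T^{-1} \) is Lipschitz continuous and \( T \) is a homeomorphism.

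The step I expect to be the main obstacle is the continuity of the exponential nonlinearity, that is, controlling \( \|e^{2u_1} - e^{2u_2}\| \) by \( \|u_1 - u_2\|_{H^1} \), since in two dimensions \( H^1 \) does not embed in \( L^\infty \); this is exactly where the Trudinger--Moser estimate of Lemma~\ref{lem:exp-u-bound} and the convergence of Lemma~\ref{lem:exp-u-convergence} are indispensable. The algebraic cancellation that kills the cross-term is elementary once noticed but equally essential: without the normalisation built into \( c(w) \) the operator \( T \) would fail to be monotone, and the argument would break down.
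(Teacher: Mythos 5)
Your proof is correct, but it takes a genuinely different route from the paper's. The paper does not argue the lemma directly: its proof is a citation, observing that Noguchi's argument for closed surfaces carries over once Lemmas~\ref{lem:exp-u-bound} and~\ref{lem:exp-u-convergence} are established on surfaces with boundary; that cited argument proceeds by showing in separate steps that \( T \) is injective, continuous, open, and has closed image, and then concludes surjectivity from the connectedness of \( \mathcal{X} \), rather than by monotone-operator theory. Your argument compresses this into a single application of the Browder--Minty theorem, and the two structural points you isolate are exactly the right ones: the normalisation~\eqref{eq:c-cond} forces \( \int_{\bar S} p\,e^{2(c(w)+w)}\,dV \) to equal the fixed constant \( -\int_{\bar S} K\,dV \), which annihilates the cross term carrying \( c(w_1)-c(w_2) \) and, together with the Poincar\'e inequality on the mean-zero subspace \( \mathcal{X} \), upgrades the pointwise monotonicity of \( t\mapsto e^{2t} \) to strong monotonicity of \( T \); and Lemmas~\ref{lem:exp-u-bound} and~\ref{lem:exp-u-convergence} supply precisely the control of the exponential nonlinearity that the failure of \( H^1\hookrightarrow L^\infty \) in two dimensions withholds. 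Your route also buys a quantitative bonus the cited argument does not make explicit, namely the Lipschitz estimate \( \|w_1-w_2\|_{H^1}\leq c_P^{-1}\|T(w_1)-T(w_2)\|_{H^1} \) for \( T^{-1} \). Two minor points deserve a sentence in a polished write-up: Lemma~\ref{lem:exp-u-convergence} is stated for weakly convergent sequences and you apply it to strongly convergent ones (harmless, since strong implies weak), and the continuity of \( c(w) \) requires \( \int_{\bar S} p\,e^{2w}\,dV>0 \), which holds because \( p \) vanishes only at finitely many points.
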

\begin{proof}
  The proof can be found in~\cite[Sec.~IV]{noguchi_yangmillshiggs_1987} for
  closed surfaces, however, it also applies for surfaces with boundary
  once we have established that
  lemmas~\ref{lem:exp-u-bound} and~\ref{lem:exp-u-convergence} hold.
\end{proof}

\begin{proposition}
  There exists a unique solution of~\eqref{eq:noguchi-problem} with Neumann
  boundary conditions if and only if condition~\eqref{eq:noguchi-condition}
  holds. This solution is of class \( C^{\infty} \).
\end{proposition}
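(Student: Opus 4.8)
The plan is to prove the equivalence in both directions and then upgrade the regularity of the weak solution by elliptic bootstrapping. For the necessity of condition~\eqref{eq:noguchi-condition}, I would integrate equation~\eqref{eq:noguchi-problem} over \( \bar S \): by the divergence theorem and the Neumann condition \( \partial_n v|_{\partial S} = 0 \), the Laplacian contributes \( \int_{\partial S}\partial_n v \, ds = 0 \), so that \( \int_{\bar S} K \, dV = -\int_{\bar S} p e^{2v}\, dV \). Since \( p \geq 0 \) vanishes only at finitely many points while \( e^{2v} > 0 \), the right-hand side is strictly negative, giving~\eqref{eq:noguchi-condition}. The same conclusion follows at the weak level by testing~\eqref{eq:weak-noguchi-problem} with the constant \( v = 1 \).

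For sufficiency I would work with the weak formulation~\eqref{eq:weak-noguchi-problem}, using test functions in all of \( H^1 \) (not \( H^1_0 \)), so that the Neumann condition is the natural boundary condition of the problem. The linear functional \( v \mapsto -\int_{\bar S} K v\, dV \) is bounded on \( H^1 \), hence by the Riesz lemma it is represented by a unique \( g \in H^1 \), and the weak problem reads \( Lu + P(u) = g \) in \( H^1 \). Writing \( u = c + w \) with \( c \in \mathbb{R} \) and \( w \in \mathcal{X} \), I would split this identity against \( \ker L = \mathbb{R} \) and against \( \mathcal{X} \). Testing with \( v = 1 \) reproduces~\eqref{eq:c-cond}; because~\eqref{eq:noguchi-condition} makes \( -\int_{\bar S} K \, dV > 0 \) while \( \int_{\bar S} p e^{2w}\, dV > 0 \) for every \( w \), the logarithms in~\eqref{eq:noguchi-c} are defined and \( c = c(w) \) is uniquely determined. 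Testing with \( v \in \mathcal{X} \), and using that \( Lu = Lw \) already lies in \( \mathcal{X} \), gives precisely \( T(w) = \pi^2 g \) with \( T \) as in~\eqref{eq:op-T}.

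At this stage I would invoke Lemma~\ref{lem:noguchi-steps}: since \( T \) is a homeomorphism of \( \mathcal{X} \), the equation \( T(w) = \pi^2 g \) has exactly one solution \( w \in \mathcal{X} \), and then \( u = c(w) + w \) is a weak solution; conversely any weak solution decomposes in this way, so the injectivity of \( T \) together with the fact that \( c \) is determined by \( w \) yields uniqueness. This reduces the entire existence-and-uniqueness claim to Lemma~\ref{lem:noguchi-steps}, whose proof rests in turn on the Trudinger--Moser inequality of Lemma~\ref{lem:exp-u-bound} (for coercivity and properness) and on the compactness of Lemma~\ref{lem:exp-u-convergence}. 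This is where the genuine analytic work lies, and it is the step I expect to be the main obstacle, the delicate point being that these estimates must be verified to survive the passage from closed surfaces to surfaces with boundary.

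Finally, to obtain \( v \in C^\infty(\bar S) \) I would bootstrap. By Lemma~\ref{lem:exp-u-bound} the function \( e^{2u} \) lies in \( L^q \) for every \( q \), so, \( p \) and \( K \) being smooth, the source \( p e^{2u} + K \) lies in \( L^q \) for all \( q \); Neumann \( L^q \)-regularity for \( \nabla^2 \) then gives \( u \in W^{2,q} \), hence \( u \in C^{1,\alpha}(\bar S) \) by Sobolev embedding. Consequently \( p e^{2u} + K \in C^{0,\alpha} \), and Schauder estimates for the Neumann problem yield \( u \in C^{2,\alpha} \); iterating, each gain in the regularity of \( u \) improves that of the source, so \( u \in C^{k,\alpha} \) for every \( k \) and thus \( u \in C^\infty(\bar S) \). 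Integrating by parts against arbitrary test functions at this stage confirms that the recovered classical solution satisfies \( \partial_n v|_{\partial S} = 0 \). The zeros of \( p \) cause no trouble, since they are merely zeros of a smooth coefficient and \( p e^{2u} \) remains smooth.
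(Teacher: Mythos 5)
Your proposal is correct and follows essentially the same route as the paper: Riesz representation of the data, the splitting \( H^1 = \mathbb{R}\oplus\mathcal{X} \) with \( c \) determined by \( w \) via~\eqref{eq:noguchi-c}, reduction to \( T(w)=\pi^2 k \) and Lemma~\ref{lem:noguchi-steps}, and recovery of the Neumann condition as the natural boundary condition of the weak formulation. You are in fact somewhat more explicit than the paper on the necessity direction (integrating the equation over \( \bar S \)) and on the \( C^\infty \) bootstrap, where the paper's written argument stops at \( W^{2,2} \).
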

\begin{proof}
  Let \( k \in H^1 \) be the only function such that for all \( v \in H^1 \),
  \begin{equation*}
    (k,v) = -\int_{\bar S} Kv dV.
  \end{equation*}
  By Lemma~\ref{lem:noguchi-steps} there exists a unique \( w \in \mathcal{X} \) such
  that for any \( v\in \mathcal{X} \),
  \begin{equation*}
    Lw + \pi^2P(c(w) + w) = \pi^2k.
  \end{equation*}
  Let \( u = c(w) + w \), then for any \( v \in \mathcal{X} \).
  \begin{equation*}
    \int_{\bar S} \langle \nabla u,\,\nabla v\rangle dV +
    \int_{\bar S}\,\left(p\exp(2u) - \overline{p\exp({2u})}\right)v dV
    = -\int_{\bar S}\,\left(K - \overline K\right)\,v\,dV,
  \end{equation*}
  since the average of any \( v \in \mathcal{X} \) is 0, we deduce,
  \begin{align}
    \label{eq:weak-noguchi-problem-X}
    \int_{\bar S}\langle \nabla u,\,\nabla v\rangle \, dV +
    \int_{\bar S}\,pe^{2u}v dV = -\int_{\bar S}\,KvdV.
  \end{align}
  As a consequence of~\eqref{eq:c-cond},
  Equation~\eqref{eq:weak-noguchi-problem-X} is
  valid for all \( v \in H^1 \). Therefore, \( u \) is a weak solution
  of~\eqref{eq:noguchi-problem}. By the usual elliptic estimates, \( u \) is a
  function in the Sobolev space \(W^{2,2} \), whence, it makes sense to use the
  divergence theorem in order to convert
  Equation~\eqref{eq:weak-noguchi-problem-X} into
  \begin{align}
    \label{eq:weak-noguchi-neumann}
    -\int_{\bar S} \nabla^2 u\,vdV +
    \int_{\bar S}pe^{2u}v dV + \int_{\partial S}\partial_{n} u\,v ds =
    -\int_{\bar S}\,Kv dV.
  \end{align}
  This equation is valid for any \( v \in H^1 \), in particular, for all
  \( v\in H^1 \) such that \( v|_{\partial S} = 0 \), whence \( u \) is a solution
  of~\eqref{eq:noguchi-problem} in the interior of the surface,
  which implies
  \begin{align}\label{eq:weak-laplacian}
    -\int_{\bar S} \nabla^2 u\,vdV +
    \int_{\bar S}\,pe^{2u}v dV = -\int_{\bar S}Kv dV
  \end{align}
  for all \( v \in H^1 \). Equations~\eqref{eq:weak-laplacian}
  and~\eqref{eq:weak-noguchi-neumann} imply \( \partial_n u = 0 \) and thus
  \( u \) is the unique solution of the Neumann problem.
\end{proof}

\begin{proof}[Proof of Theorem~\ref{thm:tabues-problem}]
  The function \(u_0 \) defined in~\eqref{eq:u0} is smooth, except for a finite
  set of singularities, moreover, we know that for any \(\epsilon > 0 \) smaller
  than the injectivity radius of the metric and for any singular point \(P \),
  there is a  smooth function \(\tilde u: B_{\epsilon}(P) \to \mathbb{R} \) such
  that \(u_0 = n\log r + \tilde u \), where \(n \) is a positive integer and
  \(r(X) \) is the Riemannian distance of the point \(X \) to \(P \). Within the
  injectivity radius, \(r \) is a smooth function, hence the function \(p = e^{u_0} \)
  is smooth in all the surface, moreover, if we define the constant function
  \begin{align}\label{eq:contant-k}
    K = - 1 + \left(N + \frac{M}{2}\right) \frac{4\pi}{A},
  \end{align}
  then by the Bradlow bound, condition~\eqref{eq:noguchi-condition} is
  fulfilled. Therefore, there exists a unique solution \(\tilde h \) of the
  regularised Taubes equation such that \(\partial_n \tilde h = 0 \) and this
  function is also smooth. Defining \(h = u_0 + \tilde h \) concludes the proof of the theorem.
\end{proof}

\section{Numerical results}\label{sec:numerics}

We solved numerically the Taubes equation on an Euclidean disk of radius 3, so that the Bradlow bound is satisfied for a single vortex, located either at the origin or at the boundary of the disk. In the first case, we defined \(\tilde{h} = h - \log |z|^2 \), then Taubes equation is equivalent to,
\begin{align}
  \nabla^2 \tilde{h} &= |z|^2 e^{\tilde{h}} - 1, \\
  \partial_{n} \tilde{h} &= -\frac{2}{3}.
\end{align}
Since the problem is rotationally symmetric, \(\tilde{h} \) is a function of the distance to the origin, in this case, \(\tilde{h} \) is the solution of the following boundary value problem, 
\begin{align}
\tilde{h}'' + \frac{1}{r}\tilde{h}' &= r^2 e^{\tilde{h}} - 1, & r \in (0, 3), \\
\tilde{h}'(0) &= 0, \\
\tilde{h}'(3) &= -2/3.
\end{align}
To avoid the singularity at the origin, we solved by a shooting method, expanding \(\tilde{h} \) in Taylor series around the origin, we found that for the differential equation to hold up to third order, \(\tilde{h} \) must be of the form, 
\begin{equation}\label{eq:taylor-shooting-parameter}
  \tilde{h} = h_{0} - \frac{1}{4} r^2 + \frac{e^{h_{0}}}{16} r^4 + \mathcal{O} (r^5),
\end{equation}
where \(h_{0} \) is the shooting parameter. We took an initial point \( \epsilon = 10^{-8} \) and solved the differential equation in the interval 
\([\epsilon, 3] \) with initial conditions \((\tilde{h}(\epsilon), \tilde{h}'(\epsilon))\) approximated by the Taylor expansion~\eqref{eq:taylor-shooting-parameter} until we reached the boundary condition at \(r = 3 \) within a tolerance of \(10^{-6}\). Figure~\ref{fig:1-vortex-disk} shows the profile of \(|\phi|^2 \), and the energy and magnetic field distributions of a rotationally symmetric single vortex located at the centre of the Euclidean disk of radius 3. We also solved for a vortex located at the boundary point \(z = 3\), in this case we defined \(\tilde{h} = h - \log |z - 3|^2\), then \(h \) is a solution of Taubes equation if and only if \(\tilde{h} \) solves the following problem on the disk,
\begin{align}
  \nabla^2 \tilde{h} &= |z - 3|^2 e^{\tilde{h}} - 1, \\
  \partial_{n} \tilde{h} &= -\frac{1}{3}.
\end{align}
In this case, the elliptic problem was solved by a finite element method, using the Fenix library, Figure~\ref{fig:2-half-vortex-disk} shows the energy, magnetic flux and \(|\phi|^2 \) for this configuration. In both cases, it is evident that the energy of the vortex concentrates near the vortex position, in the first case, the energy distribution is symmetric respect to the vortex position, on the other hand, in the second case, it is interesting to notice that the maximum energy is not located at the boundary of the Euclidean disk.

\begin{figure}[h]
  \centering
  \includegraphics[width=0.75\textwidth]{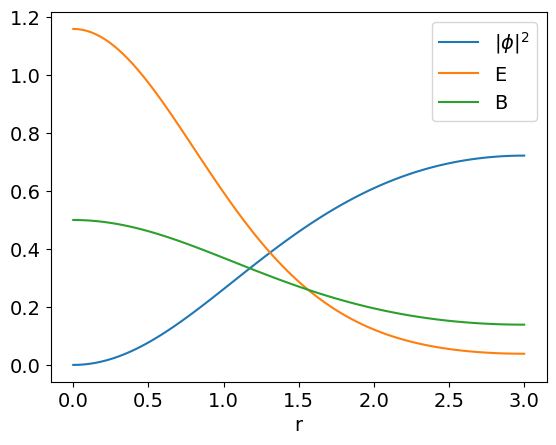}
  \caption{Energy, magnetic flux and Higgs field modulus of a single vortex located at the center of a disk with Neumann boundary conditions.}\label{fig:1-vortex-disk}
\end{figure}
\begin{figure}[h]
  \centering
  \includegraphics[width=\textwidth]{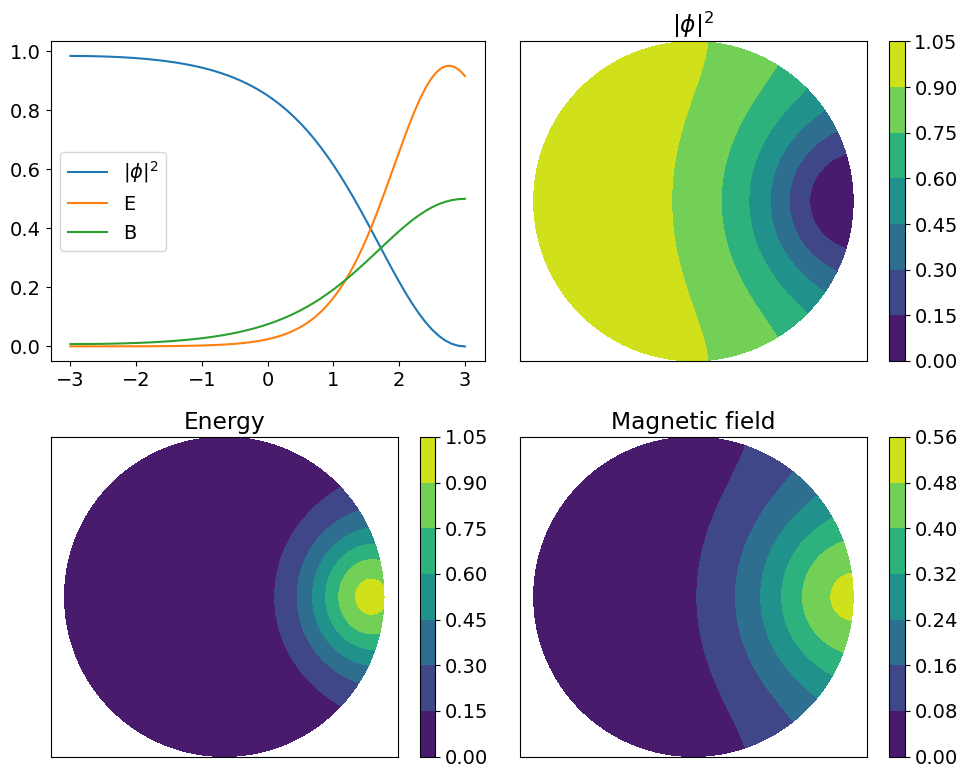}
  \caption{Energy, magnetic flux and Higgs field modulus of a half-vortex located at the boundary of a radius 3 disk with Neumann boundary conditions.}\label{fig:2-half-vortex-disk}
\end{figure}

\section{Comparing the geometry of the moduli space of one vortex}\label{sec:l2-metric}

For vortices on a Riemann surface and vortices with Dirichlet boundary conditions, the moduli space comes endowed with a K\"ahler metric, named the \(L^2\) metric, and a formula that relates the metric to local data depending only on relative vortex positions. For Neumann boundary conditions, we will exhibit an example where this is no longer the case, in fact, there is a family of surfaces where the metric of the moduli space of a single vortex depends on boundary data. Let \(X \in S \) be the position of a vortex in the interior of a Riemann surface, we denote by \(h(x; X)\) the solution of Taubes equation corresponding to this vortex, if \(F: S \to S\) is an isometry, uniqueness of the solution \(h\) implies \(h(F(x); F(X)) = h(x; X)\). Before providing the example, we present a quick introduction to relevant details of the \(L^2\)-metric, which are well documented in the literature, for details, the reader can consult the reference~\cite[Sec.~7.10,~7.14]{mantonTopologicalSolitons2004a}. 
Assume the vortex position describes a curve \(X(s)\) on the surface, in this way the gauge equivalence class \([(\phi, a)]\) of a solution of Bogomolny's equations describes a curve on moduli space. Up to gauge equivalence, we can assume Gauss' law holds, 
\begin{equation}
  *d\!*\!\dot{a} + \frac{i}{2}  \left( \bar{\phi} \dot{\phi} - \phi \bar{\dot{\phi}}\right) = 0,
\end{equation}
then the kinetic energy,
\begin{equation}
  T = \frac{1}{2} \int_{S} \left(|\dot{a}|^2 + |\dot{\phi}|^2\right) \operatorname{Vol},
\end{equation}
determines a K\"ahler metric on moduli space. On the plane, Samols studied the properties of this metric~\cite{samolsMathematicalPhysicsVortex1992a}, whereas for vortices with Dirichlet boundary conditions, Nasir gave a detailed description of the metric properties of moduli space~\cite{nasir_study_1998}. Let \(\chi: S\setminus{\{X\}} \to \mathbb{R}\) be  the phase of the vortex field \(\phi \), such that \(\phi = \exp(h/2 + i\chi)\), define the field 
\begin{equation}
  \eta = \frac{1}{2} \dot{h} + i \dot{\chi},
\end{equation}
such that \(\dot{\phi} = \phi \eta \). If \(D\) is an \(\epsilon \)-disk centred at \(X\), the kinetic energy can be computed as 
\begin{equation}\label{eq:energy-integral-border}
  T = i \int_{\partial S} \bar{\eta} \bar{\partial} \eta +  i \lim_{\epsilon \to 0} \int_{\partial D} \bar{\eta} \bar{\partial} \eta,
\end{equation} 
where \(\partial S\) and \(\partial D\) are oriented by the outward pointing normal. In any local trivialisation \(\varphi: U \to \mathbb{C}\) about \(X\), with coordinate \(Z = \varphi(X)\), it is well known that 
\begin{equation}
  \eta = \dot{Z} \partial_{Z} h(x; \varphi(X)).
\end{equation}
Notice that if \(h\) satisfies Dirichlet boundary conditions, the first term in Equation~\eqref{eq:energy-integral-border} vanishes, however, for Neumann boundary conditions, this is not necessarily the case, in fact, since \(*d \eta = 0\) at the boundary, \(\bar{\partial} \eta|_{\partial S} = d\eta|_{\partial_{S}}\), whereas for the second term, Samols deduction still holds, i.e., if we define the coefficient function \(b(Z)\) as,
\begin{equation}
  b(Z) = \left.\frac{\partial}{\partial \bar{z}}\right|_{z = Z} \left(h(\varphi^{-1}(z); \varphi^{-1}(Z)) - \log |z - Z|^2\right),
\end{equation}
and if under the chart \(\varphi \) the metric is \(\Omega(Z) |dZ|^2\), then 
\begin{equation}
  i \lim_{\epsilon \to 0}\int_{\partial D} \bar{\eta} \bar{\partial} \eta = \frac{1}{2} \pi \left( \Omega(Z) +  2 \frac{\partial b}{\partial Z}\right) |\dot{Z}|^2,
\end{equation}
on the other hand, for the first term in Equation~\eqref{eq:energy-integral-border},
\begin{align}
  i\int_{\partial S} \bar{\eta} \bar{\partial} \eta &= 
  i |\dot{Z}|^2 \int_{\partial S} \partial_{\bar{Z}}h d \partial_{Z}h \nonumber \\ 
  &= \frac{1}{4}|\dot{Z}|^2 \int_{\partial S} (\partial_{X}h d\partial_{Y}h - \partial_{Y}h d\partial_{X}h) \nonumber \\
  &= \frac{1}{2} |\dot{Z}|^2 \int_{\partial S} \partial_{X}h d\partial_{Y}h.
\end{align}
Therefore the kinetic energy of a moving vortex on moduli space is,
\begin{equation}\label{eq:kin-energy-1-vortex}
  T = 
  \frac{1}{2}\left(\int_{\partial S} \partial_{X}h d \partial_{Y} h\right) |\dot{Z}|^2 + \pi \left( \Omega(Z) +  2 \frac{\partial b}{\partial Z}\right) |\dot{Z}|^2.
\end{equation}
Equation~\eqref{eq:kin-energy-1-vortex} provides a formula for the \(L^2\) metric on moduli space, if the first integral cancels, then the metric depends purely on local data,  
we aim to prove that the first term is non-zero when \(S\) is rotationally symmetric with respect to an interior point. In the following, we assume \(S\) is isometric to a disk with polar coordinates \((r, \theta) \), and that the conformal factor is a function \(\Omega(r)\) of the distance to the origin. 

\begin{lemma}\label{lem:int-delx-d-dely}
  If \(h(z; Z)\) is the solution of Taubes equation with Neumann boundary conditions on the disk \(\mathbb{D}_{R} \subset \mathbb{C}\) with radial conformal factor \(\Omega(r)\) corresponding to a single vortex located at \(Z\), then    
  \begin{align}
    \int_{\partial \mathbb{D}_{R}(0)} \partial_{X}h d \partial_{Y}h = 
    \pi {\left( \partial_{X}h(R; 0)\right) }^2,
  \end{align}
  where \(Z = X + iY\). 
\end{lemma}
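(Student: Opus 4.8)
The plan is to exploit the two isometries of a radially symmetric disk — rotation about the origin and reflection across a diameter — together with the uniqueness statement $h(F(x);F(X)) = h(x;X)$ recorded above, in order to determine the exact angular dependence of the moduli derivatives $\partial_X h$ and $\partial_Y h$ when the vortex sits at the centre $Z=0$. Once these two functions are shown to be pure first harmonics $a\cos\theta$ and $a\sin\theta$ on the boundary circle, sharing a single radial amplitude, the boundary integral collapses to an elementary computation. The only genuinely delicate point will be the first step, namely justifying that $h$ may be differentiated with respect to the vortex position and that this derivative inherits the rotational covariance; everything afterwards is symmetry bookkeeping and a one-line integral.

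First I would record the rotational covariance. Since $R_\alpha\colon z\mapsto e^{i\alpha}z$ is an isometry of $(\mathbb{D}_R,\Omega(r)|dZ|^2)$, uniqueness gives $h(e^{i\alpha}z;e^{i\alpha}Z) = h(z;Z)$. Writing $f(z) := \partial_Z h(z;Z)\big|_{Z=0}$ for the holomorphic vortex-position derivative at the centre, and differentiating the identity in $Z$ before setting $Z=0$, the chain rule yields $f(z) = e^{i\alpha}f(e^{i\alpha}z)$, that is $f(e^{i\alpha}z) = e^{-i\alpha}f(z)$. This forces $f$ to be a pure first harmonic, so in polar coordinates $f(re^{i\theta}) = e^{-i\theta}\rho(r)$ for a single radial profile $\rho$. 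The differentiability of $h$ in the modulus $Z$ used here is the same regularity already invoked to define the $L^2$ metric, and follows from the invertibility of the linearised Taubes operator away from the cores.

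Next I would show $\rho$ is real using the reflection $z\mapsto\bar z$, which is also an isometry and fixes the real axis pointwise. Uniqueness now gives $h(\bar z;\bar Z) = h(z;Z)$, equivalently $h(x,-y;X,-Y)=h(x,y;X,Y)$; differentiating in $Y$ and restricting to the positive real axis with $Z=0$ forces $\partial_Y h(r;0)=0$ for $r>0$. Combining this with $\partial_X h = f + \bar f = 2\Re f$ and $\partial_Y h = i(f-\bar f) = -2\Im f$, we get $\rho(r)=\tfrac12\,\partial_X h(r;0)\in\mathbb{R}$, and hence on the whole punctured disk
\[
  \partial_X h = 2\rho(r)\cos\theta,\qquad \partial_Y h = 2\rho(r)\sin\theta.
\]

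Finally I would restrict to the boundary $r=R$, where both functions are smooth because the logarithmic singularity sits at the origin. Setting $a:=\partial_X h(R;0)=2\rho(R)$, the boundary restrictions are $\partial_X h = a\cos\theta$ and $\partial_Y h = a\sin\theta$, so the pullback of $d\partial_Y h$ to $\partial\mathbb{D}_R$ is $a\cos\theta\,d\theta$ and
\[
  \int_{\partial\mathbb{D}_R(0)}\partial_X h\,d\partial_Y h = \int_0^{2\pi} a^2\cos^2\theta\,d\theta = \pi a^2 = \pi\bigl(\partial_X h(R;0)\bigr)^2,
\]
with the sign fixed by the outward-normal (counterclockwise) orientation of the circle. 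This is exactly the claimed identity, and it is manifestly nonzero for generic profiles, confirming that the metric genuinely depends on boundary data.
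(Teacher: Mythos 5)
Your proposal is correct and follows essentially the same route as the paper: rotational covariance of $h(\cdot;Z)$ under $z\mapsto e^{i\alpha}z$ to pin the angular dependence of the moduli derivatives to first harmonics, the reflection $z\mapsto\bar z$ to kill $\partial_Y h(R;0)$, and then the integral of $\cos^2\theta$ over the circle. Your packaging via the complex derivative $f=\partial_Z h|_{Z=0}$ with $f(e^{i\alpha}z)=e^{-i\alpha}f(z)$ is just the holomorphic rewriting of the paper's identities~\eqref{eq:cos-delx-h}--\eqref{eq:sin-delx-h}, so no substantive difference.
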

\begin{proof}
  Since rotations respect the origin are isometries, for any rotation  of the disk, \(z \mapsto e^{i\theta}z\), \(h\) satisfies the identity \(h(z; Z) = h(e^{i\theta}z; e^{i\theta}Z)\), whence, 
  \begin{align}
    \partial_{X}h(Re^{i\theta}; 0) &= \left.\frac{d}{dt}\right|_{t=0} h(Re^{i\theta}; t) \nonumber \\
     &= \left.\frac{d}{dt}\right|_{t=0} h(R; e^{-i\theta} t) \nonumber \\ 
     &= \cos(\theta) \partial_{X}h(R; 0) - \sin(\theta) \partial_{Y}h(R; 0).\label{eq:cos-delx-h}
  \end{align}
  Similarly, we find the identity,
  \begin{equation}
    \partial_{Y} h(Re^{i\theta}; 0) = \sin(\theta) \partial_{X} h(R; 0) + \cos(\theta) \partial_{Y}h(R; 0). \label{eq:sin-delx-h}
  \end{equation} 
  The reflection \(z \mapsto \bar{z}\) is also an isometry, hence, for any \(t \in \mathbb{R}\), \(h(R;it) = h(R; -it)\), hence \(\partial_{Y}h(R; 0) = 0\). By equations~\eqref{eq:cos-delx-h} and~\eqref{eq:sin-delx-h}, 
  \begin{align}
    \partial_{X}h(Re^{i\theta}; 0) d \partial_{Y}h(Re^{i\theta}; 0) &= 
    \cos(\theta)\partial_{X}h(R; 0) d \left(
        \sin(\theta) \partial_{X}h(R;0) 
      \right) \nonumber \\
      &= {\left( \partial_{X}h(R; 0)\right) }^2 \cos^2(\theta) d\theta.\label{eq:delx-d-dely}
  \end{align}
  The Lemma follows integrating Equation~\eqref{eq:delx-d-dely}.
\end{proof}

In the next lemma, we prove \(\partial_{X}h(R; 0) \neq 0 \), by virtue of Lemma~\ref{lem:int-delx-d-dely}, the boundary integral in Equation~\eqref{eq:kin-energy-1-vortex} for a moving vortex will be non-zero if the vortex passes through  the origin, hence, there is no localization formula in this case, because the kinetic energy depends on the boundary of the surface. Let us denote \(\partial_{X}h(z; 0)\) simply as \(\partial_{X}h\), by Taubes equation, \(\partial_{X}h\) is the solution of the elliptic equation, 
\begin{align}
  \nabla^2 \partial_{X}h = e^{h} \partial_{X}h,
\end{align}
defined on \(\mathbb{D}_{R}(0) \setminus \{0\} \) and 
subject to Neumann boundary conditions on \(\partial \mathbb{D}_{R}(0)\) and to the condition 
\begin{equation}\label{eq:partialx-h-cos-r}
  \partial_{X}h(z) = \frac{-2 \cos(\theta)}{r} + \partial_{X}\tilde{h}(z),
\end{equation} 
where \(z = re^{i\theta} \) and \(\tilde{h} \) is the solution of the regular Taubes equation~\eqref{eq:taubes-reg}. Let \(u = \partial_{X}\tilde{h} \), and let us denote the Euclidean Laplacian as \({\nabla}_{e}^2 = \Omega \nabla^2 \). Notice that at vortex position \(Z = 0\), \(U(1)\)-invariance implies \(h\) is a function of \(r = |z|\), then \(u\) is the solution on the disk \(\mathbb{D}_{R}(0) \) of the PDE 
\begin{equation}\label{eq:nabla2-u}
  {\nabla}_{e}^2 u = f(r) u + g, 
\end{equation} 
with Neumann boundary conditions, where 
\begin{align}
  f(r) = \Omega(r)e^{h} \qquad \text{and} \qquad g = -\frac{2}{r} \cos(\theta) \Omega(r) e^{h}.
\end{align}
Since \(e^{h}\) has a zero of order 2 at \(r = 0\), both functions extend smoothly to \(r = 0\). Equation~\eqref{eq:nabla2-u} implies the existence of a function \(a(r)\) at least of class \(C^1\), such that \(u = a(r)\cos(\theta) \), we use this fact in the following Lemma. 

\begin{lemma}\label{lem:delx-h-r}
  Let \(h \) be the solution of the Taubes equation on the disk \(\mathbb{D}_{R}(0) \) with Neumann boundary conditions and vortex position at the origin, then \(\partial_{X}h(R) \neq 0 \). 
\end{lemma}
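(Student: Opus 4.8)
The plan is to reduce the statement to a single second-order linear ODE in the radial variable and then conclude by uniqueness of its solutions. At vortex position $Z=0$ we have $\partial_X h = \psi(r)\cos\theta$ with $\psi(r) = -\tfrac{2}{r} + a(r)$, where $a$ is the $C^1$ radial profile of $u = \partial_X\tilde h$. Substituting this ansatz into $\nabla_e^2(\partial_X h) = f\,\partial_X h$ (which is just $\nabla^2\partial_X h = e^h\partial_X h$ multiplied by $\Omega$) and using the polar form of the Euclidean Laplacian, the factor $\cos\theta$ cancels and $\psi$ satisfies
\[
  \psi'' + \frac{1}{r}\psi' - \frac{1}{r^2}\psi = f(r)\psi,
\]
equivalently $(r\psi')' = \bigl(\tfrac{1}{r} + r f(r)\bigr)\psi$ on $(0,R]$. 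Away from the origin $h$ is smooth, hence $\psi$ is smooth near $r=R$ and this is a regular linear ODE there, with nonvanishing leading coefficient.

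First I would record the boundary datum inherited by $\psi$. For every interior vortex position $Z$ the Taubes solution $h(\cdot;Z)$ satisfies the homogeneous Neumann condition $\partial_r h = 0$ on $\partial\mathbb{D}_R$. Differentiating this identity with respect to $X$ and interchanging the parameter derivative $\partial_X$ with the spatial derivative $\partial_r$ yields $\partial_r(\partial_X h)|_{r=R} = 0$; since $\partial_X h = \psi(r)\cos\theta$, this forces $\psi'(R) = 0$.

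The conclusion then follows from ODE uniqueness. Suppose, for contradiction, that $\partial_X h(R) = \psi(R) = 0$. Combined with $\psi'(R)=0$ this means $\psi$ has vanishing Cauchy data at $r=R$. On each interval $[\delta,R]$ with $\delta>0$ the coefficients $1/r$, $1/r^2$ and $f$ are continuous and the leading coefficient is $1$, so the linear equation has a unique solution with prescribed data at $R$; as the zero function solves the same initial value problem, $\psi\equiv 0$ on $[\delta,R]$, and letting $\delta\to 0$ gives $\psi\equiv 0$ on $(0,R]$. This contradicts $\psi(r) = -\tfrac{2}{r} + a(r)\to -\infty$ as $r\to 0$, since $a$ is bounded near the origin. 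Hence $\psi(R)\neq 0$, and evaluating at $\theta=0$ gives $\partial_X h(R) = \psi(R)\neq 0$.

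The main obstacle is the second step: justifying rigorously that $\partial_X h$ inherits the homogeneous Neumann condition, i.e. that $\psi'(R)=0$. This rests on the smooth dependence of the unique Taubes solution on the interior vortex position and on the legitimacy of commuting the parameter derivative in $X$ with the normal derivative at the boundary. Once these regularity facts are in place, the remaining steps are the routine reduction to the radial ODE and a standard backward-uniqueness argument.
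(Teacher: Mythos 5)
Your argument is correct, but it takes a genuinely different route from the paper. The paper also argues by contradiction from $\partial_X h(R)=0$, but the mechanism is the strong maximum principle: writing $L=\nabla^2_e-\Omega e^h$, it observes that $L\partial_X h=0$ with a nonpositive zeroth-order coefficient, applies the maximum principle on the half-annulus $\{\epsilon\leq r\leq R,\ |\theta|\leq \pi/2\}$ (whose boundary values of $\partial_X h=(-2/r+a(r))\cos\theta$ are nonpositive, and zero on $\partial\mathbb{D}_R$ under the contradiction hypothesis) to conclude that $0$ is a strict interior maximum attained at the boundary point $R$, and then invokes Hopf's lemma to get $\partial_n(\partial_X h)(R)>0$, contradicting the Neumann condition. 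Your route instead separates variables completely, reduces to the radial ODE $\psi''+\tfrac{1}{r}\psi'-\tfrac{1}{r^2}\psi=f\psi$, and derives the contradiction from backward uniqueness of the linear initial value problem at $r=R$ against the $-2/r$ singularity at the origin. Both proofs consume exactly the same nontrivial input, namely that $\partial_X h$ inherits the homogeneous Neumann condition $\psi'(R)=0$; the paper simply asserts this when it states that $\partial_X h$ is "subject to Neumann boundary conditions," whereas you derive it by differentiating $\partial_r h(\cdot;Z)=0$ in the parameter $X$ — the interchange of the parameter and normal derivatives is a regularity point left implicit in both treatments (the paper equally needs $C^2$ regularity of $\partial_X h$ up to the boundary for Hopf's lemma). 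What your approach buys is elementarity and a slightly stronger conclusion — the Cauchy data $(\psi,\psi')$ of the radial profile can never vanish simultaneously on $(0,R]$ — at the price of relying entirely on the exact rotational symmetry; the paper's maximum-principle argument is the one that would survive a perturbation destroying the separation of variables.
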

\begin{proof}
  By Equation~\eqref{eq:partialx-h-cos-r}, 
  \begin{equation}\label{eq:partialx-h-polar-coords}
    \partial_{X}h = 
    \frac{-2\cos(\theta)}{r} + u = 
    \left(-\frac{2}{r} + a(r)\right)\cos(\theta),
  \end{equation}
  for some function \(a(r)\). Let \(\epsilon > 0 \) be a small number such that \(-2 \epsilon^{-1} + a(\epsilon) < 0\).  Assume towards a contradiction \(\partial_{X}h(R) = 0\), let \(L = \nabla^2_{e} - \Omega \, e^{h}\), then \(L \) is an elliptic, negative operator, such that \(L\partial_{X}h = 0\), we can apply the strong maximum principle in the half annulus \(\epsilon \leq r \leq R \), \(-\pi/2 \leq \theta \leq \pi/2 \), to deduce by Equation~\eqref{eq:partialx-h-polar-coords}  that on this region, 0 is the global maximum of \(\partial_{X}h\), hence, \(\partial_{X}h(R) > \partial_{X}h(z) \) for \(z\) in a small disk \(D\) contained in the interior of the half-annulus and such that \(R \in \partial D\). By Hopf's lemma~\cite[Lem.~3.4]{gilbargEllipticPartialDifferential2015}, the normal derivative \(\partial_{n}\!\left(\partial_{X}h\right)\) with respect to the Euclidean metric satisfies 
  \begin{align}
    \partial_{n}\!(\partial_{X}h)(R) > 0,
  \end{align}
  this implies the normal derivative with respect to the metric we consider on \(\mathbb{D}_{R}(0)\) is also positive, since our metric is conformally flat. This contradicts the Neumann boundary condition. 
\end{proof}
Recall Equation~\eqref{eq:kin-energy-1-vortex} for the kinetic energy of a vortex moving on moduli space, lemmas~\ref{lem:int-delx-d-dely} and~\ref{lem:delx-h-r} prove that at the origin, the energy also depends on the value of \(\partial_X h \) at the boundary of the disk, whereas for vortices on surfaces with no boundary and vortices with Dirichlet boundary conditions, the energy, and hence the \(L^2\) metric, only depends on vortex position and leads to the localization formula.

\section{Conclusion and outlook}\label{sec:conclusion-outlook}

In this work we proved that on any surface with boundary there exists a
unique solution of the Taubes equation with Neumann boundary
conditions, provided the location of the core set of the
Higgs field and that the Bradlow 
bound~\eqref{eq:bradlow-cond} holds. We extended previous
results developed for vortices on the quotient of closed surfaces by a
reflection symmetry. We also proved that the Bradlow bound
found by Manton and Zhao is necessary
and sufficient for the existence of vortices satisfying Neumann 
boundary conditions on the surface. As a consequence, the moduli space of vortices with Neumann boundary conditions is well defined. For
closed surfaces it is known that this space is a stratified, complete K\"ahler
manifold, with a metric depending on local data, regardless the surface shape. We found evidence that for Neumann boundary conditions the \(L^2 \)
metric is no longer a local object, and that the metric depends on boundary data of the vortices, opening the possibility that this is a general phenomena, however, a more detailed study of the geometry of moduli space would be necessary, either numeric or analytic. 

\section*{Acknowledgements}

René Israel García Lara acknowledges the support by the UNAM Postdoctoral Program (POSDOC).

\appendix

\section{Deforming Green functions on a surface with
  boundary}\label{sec:green}

In this section we show how to construct Green functions
satisfying Neumann conditions on
\( S \). If \(\Delta \subset \bar{S} \times \bar{S} \) is the main diagonal set,
recall~\cite{aubin_nonlinear_2013} \(G: \bar S\times \bar S\setminus \Delta
  \to \mathbb{R} \) is a Green function for the Laplace operator, with
Dirichlet boundary conditions, if \(G(P,Q) \) is a solution of the equation
\begin{align}
  \label{eq:1}
  -\nabla^2_{P} G(\cdot,Q) = \delta_{Q},
\end{align}
in the sense of distributions and whenever \(P \) or \(Q \) are boundary
points, then \(G(P,Q) = 0 \). For any test function \(\varphi \in  C^2(\bar S) \), \(G \) satisfies the integral equation
\begin{align}
  \label{eq:green-integral-identity}
  \varphi(Q) = -\int_{S}G(P,Q)\nabla^2\varphi(P)\,dV(P) -
  \int_{\partial S} \partial_{nP}\,G(P,Q)\varphi(P)\,ds(P).
\end{align}

\begin{lemma}\label{lem:green-interior}
  For any \( Q \in S \), there exists a function \( G_Q \), smooth on \( S\setminus \{Q\} \) and continuous in \( \bar S\setminus
    \{Q\} \), such that,
  \begin{align}
    -\nabla^2G_Q   & = \delta_Q - \frac{1}{A}, \\
    \partial_n G_Q & = 0,
  \end{align}
  where \( A \) is the area of \( S \).
\end{lemma}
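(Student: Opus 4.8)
The plan is to build $G_Q$ by the classical parametrix method, isolating the logarithmic singularity at $Q$ and then correcting it by the solution of a regular Neumann problem, whose solvability is guaranteed precisely because the prescribed source has zero total integral. Indeed, the problem $-\nabla^2 u = f$ with $\partial_n u = 0$ admits a solution only when $\int_{\bar S} f\,dV = 0$, and here $\int_{\bar S}(\delta_Q - A^{-1})\,dV = 1 - 1 = 0$, which is exactly why the constant $A^{-1}$ appears on the right-hand side.

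First I would fix geodesic normal coordinates centred at $Q$ together with a cutoff function $\chi$ equal to $1$ on a small geodesic ball $B_{2\epsilon}(Q)$ and supported in $B_{3\epsilon}(Q)$, with $3\epsilon$ smaller than both the injectivity radius and the distance from $Q$ to $\partial S$. Writing $r$ for the Riemannian distance to $Q$, I set the parametrix $H_Q = -\tfrac{1}{2\pi}\chi\log r$. Since $-\nabla^2\bigl(-\tfrac{1}{2\pi}\log r\bigr) = \delta_Q$ modulo a term that stays bounded near $Q$ (the curvature correction from $\sqrt{g} = 1 + O(r^2)$ in normal coordinates), one obtains $-\nabla^2 H_Q = \delta_Q + \rho_Q$, where $\rho_Q$ is continuous on $\bar S$, smooth on $\bar S\setminus\{Q\}$, and supported in $B_{3\epsilon}(Q)$. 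Because $H_Q$ vanishes in a neighbourhood of $\partial S$, it satisfies $\partial_n H_Q = 0$ there; integrating the distributional identity over $\bar S$ and using the divergence theorem (the boundary term vanishes) gives $\int_{\bar S}\rho_Q\,dV = -1$.

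Next I would solve the regular Neumann problem $-\nabla^2\psi_Q = -\rho_Q - A^{-1}$ with $\partial_n\psi_Q = 0$. Its compatibility condition holds, since $\int_{\bar S}(-\rho_Q - A^{-1})\,dV = 1 - 1 = 0$. Existence of a unique mean-zero weak solution follows from the Lax--Milgram theorem applied to the coercive form $(u,v)\mapsto\int_{\bar S}\langle\nabla u,\nabla v\rangle\,dV$ on $\mathcal{X} = \{u\in H^1\mid\bar u = 0\}$, coercivity coming from the Poincaré inequality used in Section~\ref{sec:taubes-reg}; equivalently, one inverts the operator $L$ of that section on $\mathcal{X}$. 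Since the source $-\rho_Q - A^{-1}$ is continuous on $\bar S$ and smooth away from $Q$, elliptic regularity for the Neumann problem up to the boundary makes $\psi_Q$ smooth on $\bar S\setminus\{Q\}$ and continuous on $\bar S$.

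Finally I would set $G_Q = H_Q + \psi_Q$. By construction $-\nabla^2 G_Q = (\delta_Q + \rho_Q) + (-\rho_Q - A^{-1}) = \delta_Q - A^{-1}$ and $\partial_n G_Q = \partial_n H_Q + \partial_n\psi_Q = 0$, while $G_Q$ inherits the logarithmic singularity of $H_Q$ at $Q$, is smooth on $S\setminus\{Q\}$, and is continuous on $\bar S\setminus\{Q\}$, as claimed. I expect the only delicate point to be the bookkeeping in the parametrix step: checking that the curvature correction $\rho_Q$ is genuinely continuous with $\int_{\bar S}\rho_Q\,dV = -1$, so that the corrector problem is solvable, and that $G_Q$ is regular enough up to the boundary. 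The solvability and regularity of the linear Neumann problem itself are routine once the Poincaré inequality and the functional-analytic machinery of Section~\ref{sec:taubes-reg} are in place.
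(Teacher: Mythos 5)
Your proof is correct, but it takes a genuinely different route from the paper. You build the singularity directly with a local parametrix $-\tfrac{1}{2\pi}\chi\log r$ in normal coordinates around $Q$, absorb the curvature remainder $\rho_Q$ into a regular Neumann problem whose compatibility condition $\int_{\bar S}(-\rho_Q - A^{-1})\,dV = 0$ you verify explicitly, and invert $L$ on the mean-zero subspace $\mathcal{X}$ via Lax--Milgram. The paper instead takes the Dirichlet Green function $G(\cdot,Q)$ (whose existence is cited from Aubin) as the starting point, multiplies it by a bump function $\varphi$ supported near $\partial S$ and vanishing near $Q$ to isolate its boundary behaviour as a smooth function $H$, computes $\int_{\partial S}\partial_n H\,ds = -1$ from the Green integral identity to get the same compatibility condition, and sets $G_Q = G(\cdot,Q) - H + f$. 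The two constructions are dual in a sense: the paper keeps the Dirichlet Green function's interior singularity and corrects its boundary flux, while you build the singularity from scratch and never need the Dirichlet Green function at all. Your version is more self-contained (it only uses the flat fundamental solution plus normal coordinates) and, amusingly, is closer in spirit to the paper's own proof of Lemma~\ref{lem:green-boundary}, which uses exactly a cutoff-logarithm parametrix for the boundary point case; the paper's version of Lemma~\ref{lem:green-interior} buys brevity by outsourcing the singularity analysis to the cited existence of $G(\cdot,Q)$. The one point you should make fully precise is the regularity of $\rho_Q$ across $Q$ (it is bounded and extends continuously, with value determined by the Gaussian curvature at $Q$, which suffices), but you flag this yourself and it does not affect the conclusion, since only smoothness of $G_Q$ away from $Q$ is claimed.
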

\begin{proof}
  We choose a tubular neighbourhood of \( \partial S \) such that \( Q \) is
  in the exterior. By choosing a bump function \( \varphi: \bar S \to
    \mathbb{R} \) such that \( \varphi \equiv 1 \) in the tubular
  neighbourhood and \( \varphi \equiv 0 \) in a neighbourhood of \( Q \), we
  can defined a new smooth function \( H(P) = \varphi(P)\,G(P, Q) \) such
  that \( \partial_{n}H = \partial_{nP}G(P, Q) \),
  by equation~\eqref{eq:green-integral-identity}
  \begin{align*}
    \int_{\partial S} \partial_{n} H(P)\,ds(P) = -1.
  \end{align*}
  Let \( f \) be a solution of
  \begin{align}
    \label{eq:2}
    -\nabla^2 f  & = -\nabla^2 H - \frac{1}{A}, \\
    \partial_n f & = 0,
  \end{align}
  then we can define \( G_Q = G(\cdot, Q) -H + f \).
\end{proof}

\begin{lemma}\label{lem:green-boundary}
  If \( Q \in \partial S \), there exists a solution \(H_Q \) of the equation
  \begin{align}
    \label{eq:3}
    \nabla^2 H_Q   & = \frac{1}{A}, \\
    \label{eq:4}
    \partial_n H_Q & = \delta^1_{Q}.
  \end{align}
\end{lemma}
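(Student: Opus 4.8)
The plan is to construct \(H_Q\) in the same spirit as Lemma~\ref{lem:green-interior}: isolate the singular behaviour forced by the boundary delta into an explicit, globally defined model function, and then correct it by solving a regular inhomogeneous Neumann problem whose solvability is guaranteed by a compatibility condition. Before constructing anything, I note that the divergence theorem already shows the problem is consistent: if \(H_Q\) exists then \(\int_{S}\nabla^2 H_Q\,dV=\int_{\partial S}\partial_n H_Q\,ds\), and both sides equal \(1\), since \(\int_S A^{-1}\,dV=1\) and \(\delta^1_Q\) has unit mass. This is the only obstruction to solvability, and it holds.

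For the local model, I would fix isothermal coordinates near \(Q\) in which the boundary arc is straight, so that the metric is \(\Omega\,(dx^2+dy^2)\) on a half-disk \(H^+\) with \(Q\mapsto 0\) and \(\partial S\mapsto\{y=0\}\); such coordinates exist at any smooth boundary point of a surface. There \(\nabla^2=\Omega^{-1}\nabla_e^2\), so the Euclidean-harmonic function \(s=-\tfrac{1}{\pi}\log r\), \(r=\sqrt{x^2+y^2}\), is also Laplace--Beltrami harmonic away from \(Q\). A computation with Green's identity on \(H^+\) minus a semicircle \(B_\epsilon\) shows that the only surviving boundary contribution as \(\epsilon\to 0\) is the semicircle flux \(\tfrac1\pi\int_0^\pi(\partial_r s)\,\epsilon\,d\theta\to 1\), concentrated at \(Q\); since the conformal factor enters the unit normal and the boundary arclength with cancelling powers of \(\Omega(Q)^{\pm1/2}\), this yields \(\partial_n s=\delta^1_Q\) as a distribution on \(\partial S\). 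Choosing a cutoff \(\varphi\) with \(\varphi\equiv1\) near \(Q\) and support in the chart, I set \(S_0=\varphi\,s\); then \(S_0\) is smooth on \(\bar S\setminus\{Q\}\), its Laplacian \(\nabla^2 S_0=\sigma\) is a smooth function on \(\bar S\) (supported in the transition annulus and vanishing identically near \(Q\)), and \(\partial_n S_0=\delta^1_Q+\rho\) with \(\rho\) smooth on \(\partial S\).

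It then remains to solve the regular problem \(\nabla^2 f=A^{-1}-\sigma\), \(\partial_n f=-\rho\). Its compatibility condition \(\int_S(A^{-1}-\sigma)\,dV=-\int_{\partial S}\rho\,ds\) holds automatically: applying the divergence theorem to \(S_0\) on \(S\setminus B_\epsilon(Q)\) and letting \(\epsilon\to0\) gives \(\int_S\sigma\,dV=\int_{\partial S}\rho\,ds+1\), the extra \(1\) being the semicircle flux identified above, while \(\int_S A^{-1}\,dV=1\). Hence \(f\) exists by standard linear elliptic theory for the Neumann problem on a compact surface with boundary (as in the auxiliary problem of Lemma~\ref{lem:green-interior}), unique up to an additive constant, which is exactly the ambiguity expected of a Green function. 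Setting \(H_Q=S_0+f\) gives \(\nabla^2 H_Q=\sigma+(A^{-1}-\sigma)=A^{-1}\) and \(\partial_n H_Q=(\delta^1_Q+\rho)-\rho=\delta^1_Q\), as required.

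I expect the main obstacle to be the precise identification \(\partial_n s=\delta^1_Q\) for the singular model: one must check that the logarithmic profile produces exactly a unit boundary delta and no spurious smooth part at \(Q\), and that the conformal rescaling of the normal derivative cancels against that of the boundary arclength so that no leftover factor of \(\Omega(Q)\) survives. As a conceptual alternative that sidesteps this computation, one may double \(\bar S\) across \(\partial S\) to a closed surface with reflection isometry \(\sigma\), take the ordinary Green function \(\hat G(\cdot,Q)\) with the \(\sigma\)-fixed source \(Q\), which is \(\sigma\)-even and hence has vanishing normal derivative away from \(Q\), and set \(H_Q=-2\,\hat G(\cdot,Q)|_{\bar S}\); the log singularity then contributes precisely \(\partial_n H_Q=\delta^1_Q\) while \(\nabla^2 H_Q=A^{-1}\) in the interior. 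Everything else is routine: the existence of boundary-straightening isothermal coordinates and the solvability of the corrected Neumann problem are standard.
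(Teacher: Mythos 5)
Your proposal is correct and follows essentially the same route as the paper's proof: a cutoff of the local model \( -\pi^{-1}\log r \) in boundary-straightening isothermal coordinates, whose normal derivative is shown to produce exactly the unit boundary delta (the flat part of the boundary contributing nothing since \( \partial_2 r|_{x_2=0}=0 \)), followed by a correction solving a regular inhomogeneous Neumann problem. If anything, you are more careful than the paper in explicitly verifying the compatibility condition for the corrector problem, which the paper leaves implicit.
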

\begin{proof}
  We follow~\cite[pg.~107]{aubin_nonlinear_2013} adjusting for the
  fact that \( Q \)  is a boundary point. Firstly, we show the existence
  of a function \( G_{Q} \) such that \( \partial_{n}G_Q = \delta^1_{Q} \).
  Let \( (U,\varphi) \) be a chart
  around \( Q \) such that \( \varphi \) maps \( U \) onto the upper half plane and \( \varphi(Q) = 0 \). We assume that in this chart the metric takes the
  form
  \begin{align*}
    \Omega \, (dx_1^2 + dx_2^2).
  \end{align*}
  Let \((x_1, x_2) \) be the coordinates of an arbitrary point \(P\in U \) and let \(\delta_1 < \delta_2 \) be two small positive numbers, we define the
  function \(r(P) = {(x_1^2 + x_2^2)}^{1/2} \), we can choose a  positive decreasing function \(f(r) \) which is
  1 for \(r(P) \leq \delta_1 \) and 0 for \(r(P) \geq \delta_2 \). By means of the coordinate chart, we define the function \(G_Q\in C^\infty(\bar S) \)
  such that \(G_Q(P) = -\pi^{-1}f(r)\log(r) \) for \(P \in U \) and \( G_Q(P)
    = 0 \) otherwise. Let \( B_Q(\epsilon) = \{P \in U \mid r(P) \leq
    \epsilon \} \) and let \( \psi \in C^2(\bar S) \) be a test function,  we
  aim to compute
  \begin{align*}
    \int_{\partial S}\partial_{n}
    G_Q\,\psi \, ds  = \lim_{\epsilon \to 0}\int_{\partial \left(\bar S
      \setminus B_{ Q } (\epsilon)\right)}\partial_{n} G_Q\,\psi \, ds.
  \end{align*}
  \( G_Q \) is 0 for points in the exterior of \( U \), whereas for points in
  \( U\cap \partial S \), the normal exterior vector can be parametrised
  as a multiple of \( \partial_2 \) and for points in \( \partial
    B_Q(\epsilon) \) it points in the opposite direction of the radial
  vector field \( x_1\partial_1 + x_2\partial_2 \). Whence,
  \begin{align*}
    \int_{\partial S}\partial_{n}
    G_Q\,\psi \, ds & = \lim_{\epsilon\to 0} \int_{\{(x_1,x_2)\mid x_2=0,\,
      x_1^2\geq
      \epsilon \}} \partial_{n} G_{Q} \,\psi \, ds +
    \lim_{\epsilon\to 0
    } \int_{\partial B_Q(\epsilon)}
    \partial_{n}G_Q\,\psi \, ds
    \\
                    & = \lim_{\epsilon\to
    } \int_{\partial B_Q(\epsilon)}
    \partial_{n} G_Q\,\psi \, ds,
    \\
                    & = \psi(Q).
  \end{align*}
  where in the first equation, the first integral cancels because the
  normal exterior derivative is a multiple of \( \partial_2 \) and
  \( \partial_2 r|_{x_2=0} = 0 \). Therefore, \( G_Q \) is the required
  function. Secondly, we fix a solution \( u \) of the problem
  \begin{align*}
    \nabla^2 u  & =  -\nabla^2G_Q +  \frac{1}{A}, \\
    \partial_n u & = 0.
  \end{align*}
  Defining \( H_Q = G_Q + u \) we conclude the statement of the lemma.
\end{proof}

\bibliography{references}
\bibliographystyle{plain}

\end{document}